\newtheorem*{remark}{Remark}
\newtheorem{theorem}{Theorem}[section]
\newaliascnt{lemma}{theorem}
\newtheorem{lemma}[lemma]{Lemma}
\newaliascnt{proposition}{theorem}
\newaliascnt{corollary}{theorem}
\newtheorem{corollary}[corollary]{Corollary}
\newaliascnt{conjecture}{theorem}
\def\tagform@#1{\maketag@@@{\ignorespaces#1\unskip\@@italiccorr}}
\let\orgtheequation\theequation
\def\theequation{(\orgtheequation)}
\def\equationautorefname~{}
\renewcommand{\C}{{\mathbb C}}
\newcommand{\D}{{\mathbb D}}
\newcommand{\e}{\varepsilon}
\newcommand{\psiplus}{{\psi_+}}
\newcommand{\psiminus}{{\psi_-}}
\newcommand{\Ray}{\operatorname{Ray}}
\newcommand{\R}{{\mathbb R}}
\newcommand{\Real}{\operatorname{Re}}
\newcommand{\spec}{\operatorname{spec}}
\newcommand{\Z}{{\mathbb Z}}
\begin{document}

\title[]{Magnetic spectral bounds on starlike plane domains}

\author[]{R. S. Laugesen and B. A. Siudeja}
\address{Department of Mathematics, Univ.\ of Illinois, Urbana,
IL 61801, U.S.A.}
\email{Laugesen\@@illinois.edu}
\address{Department of Mathematics, Univ.\ of Oregon, Eugene,
OR 97403, U.S.A.}
\email{Siudeja\@@uoregon.edu}
\date{\today}

\keywords{Isoperimetric, spectral zeta, heat trace, partition function, Pauli operator.}
\subjclass[2010]{\text{Primary 35P15. Secondary 35J20}}

\begin{abstract}
We develop sharp upper bounds for energy levels of the magnetic Laplacian on starlike plane domains, under either Dirichlet or Neumann boundary conditions and assuming a constant magnetic field in the transverse direction. Our main result says that $\sum_{j=1}^n \Phi \big( \lambda_j A/G \big)$ is maximal for a disk 
whenever $\Phi$ is concave increasing, $n \geq 1$, the domain has area $A$, and $\lambda_j$ is the $j$-th Dirichlet eigenvalue of the magnetic Laplacian $\big( i\nabla+ \frac{\beta}{2A}(-x_2,x_1) \big)^2$. Here the flux $\beta$ is constant, and the scale invariant factor $G$ penalizes deviations from roundness, meaning $G \geq 1$ for all domains and $G=1$ for disks. 
\end{abstract}

\maketitle

\vspace*{-12pt}

\section{\bf Introduction}
\label{sec:intro}

\subsection*{Overview}
The energy levels of a charged quantum particle in a two dimensional region are difficult to understand analytically. We aim for insight into the behavior of these energy levels by proving that they are maximal for a certain disk whose radius is computed from the boundary shape of the original confinement region. 

Specifically, we develop sharp upper bounds for energy levels of the magnetic Laplacian on starlike plane domains, under either Dirichlet or Neumann boundary conditions, assuming a constant magnetic field in the transverse direction. The spectral functionals we consider include the ground state energy, sums and products of energy levels, the spectral zeta function, and the partition function. 

For the special case of the ground state energy, our upper bound complements a lower bound of Faber--Krahn type due to Erd\"{o}s \cite{E96}, which says that the first eigenvalue $\lambda_1$ of the magnetic Dirichlet Laplacian is minimal for the disk of the same area. Combining these upper and lower bounds gives a pair of inequalities:
\[
1 \leq \frac{\lambda_1(\Omega)}{\lambda_1(\Omega^*)} \leq G(\Omega)
\]
when $\Omega \subset \R^2$ is a starlike plane domain, $\Omega^*$ is the disk of the same area, and the computable geometric factor $G(\Omega)$ measures how far the domain is from being circular (with $G=1$ for a disk; see the definition in the next section). Note that the upper estimate in this paper requires starlikeness of the domain, whereas the lower estimate due to Erd\"{o}s holds whenever the domain is merely bounded. The upper estimate has the advantage of applying also under Neumann boundary conditions. 

Our results apply to a huge class of spectral functionals beyond the ground state energy. \autoref{Deigen} shows that
\[
\sum_{j=1}^n \Phi \Big( \frac{\lambda_j(\Omega)}{G(\Omega)} \Big) \leq \sum_{j=1}^n \Phi \big( \lambda_j (\Omega^*) \big)
\]
whenever $n \geq 1$ and $\Phi : \R_+ \to \R$ is concave and increasing.

\subsection*{Formulating the problem}
Let us begin with some physical background, and then formulate the results precisely.  Impose a vertical magnetic field of constant strength through a cylinder $\Omega \times \R$, and let the magnetic Schr\"{o}dinger operator act upon a charged, spinless quantum particle that is confined to the cylinder. The particle moves freely in the vertical direction, and so its wavefunction can be written in separated form as a plane wave in the vertical direction multiplied by an eigenfunction of the magnetic Laplacian in the horizontal directions. The energy levels of the horizontal motion are the objects of our study. 

To state the problem mathematically, consider a bounded plane domain $\Omega$ with area $A$, and fix a real number $\beta$. The magnetic Laplacian on $\Omega$ is the symmetric operator 
\[
(i \nabla + F)^2
\]
where the vector potential 
\[
F(x)=\frac{\beta}{2A}(-x_2,x_1)
\]
is chosen to generate a transverse magnetic field $\nabla \times F = (0,0,\beta/A)$ of strength $B=\beta/A$, as illustrated in \autoref{fi:magfig}. The constant $\beta$ represents the magnetic flux through the domain. 

\begin{figure}[t]
  \begin{center}
    \vspace{-0.6cm}
\begin{tikzpicture}[scale=1]
  \draw (0,0) -- (3,0) -- node[below]{$\Omega$} (4,1) -- (1,1) -- cycle;
  \draw (3/4,3/2) -- ++(0,-2/6) coordinate (a);
  \draw[<-] (a) -- ++(0,-5/6) coordinate (a);
  \draw[dotted] (a) -- (3/4,0) coordinate (b);
  \draw (b) -- ++(0,-1/3) coordinate (b);
  \draw[<-] (b) -- ++(0,-1/3);
  \draw (2,5/3) node[left]{$\mathbf B$}-- ++(0,-2/6) coordinate (a);
  \draw[<-] (a) -- ++(0,-5/6) coordinate (a);
  \draw[dotted] (a) -- (2,0) coordinate (b);
  \draw (b) -- ++(0,-1/4) coordinate (b);
  \draw[<-] (b) -- ++(0,-1/4);
  \draw (2.75,3/2) -- ++(0,-2/6) coordinate (a);
  \draw[<-] (a) -- ++(0,-5/6) coordinate (a);
  \draw[dotted] (a) -- (2.75,0) coordinate (b);
  \draw (b) -- ++(0,-1/3) coordinate (b);
  \draw[<-] (b) -- ++(0,-1/3);
\end{tikzpicture}
  \end{center}
\caption{\label{fi:magfig} A plane domain subjected to a transverse magnetic field.}
\end{figure}
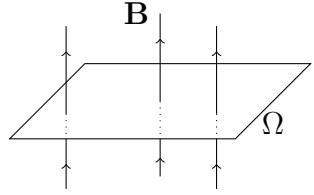

Obviously the magnetic Laplacian reduces to the usual Laplacian in the absence of a magnetic field, that is, when $\beta=0$ and $F \equiv 0$.

The magnetic Laplacian has discrete spectrum, assuming Dirichlet boundary conditions on $\partial \Omega$, with eigenvalues $\{ \lambda_j \}$ satisfying
\[
0 < \lambda_1 \leq \lambda_2 \leq \lambda_3 \leq \dots
\]
We denote by $\{ u_j \}$ a corresponding sequence of $L^2$-orthonormal eigenfunctions, with
\begin{equation} \label{eq:specD}
\begin{cases}
(i \nabla + F)^2 u_j = \lambda_j u_j \;\; \text{in $\Omega$,} \\
\hfill u_j = 0 \;\; \text{on $\partial \Omega$.}
\end{cases}
\end{equation}
The normalized eigenvalues $\lambda_j A$ are invariant under dilation of the domain, as one can check straightforwardly using that the field strength $\beta/A$ scales inversely with the area. (For a leisurely treatment of this and other invariance properties of the spectrum of the magnetic Laplacian, see \cite[Section 2 and Appendix A]{LLR12}.) 

Assume $\Omega$ is a \emph{Lipschitz-starlike plane domain}, by which we mean 
\[
\Omega = \{ re^{i\theta} : 0 \leq r < R(\theta) \}
\]
where the radius function $R(\cdot)$ is positive, $2\pi$-periodic, and Lipschitz continuous. Define two scale-invariant geometric factors in terms of the radius function, by 
\begin{align*}
G_0 & = 1 + \frac{1}{2\pi} \int_0^{2\pi} (\log R)^\prime(\theta)^2 \, d\theta , \\
G_1 & = \frac{\frac{1}{2\pi} \int_0^{2\pi} R(\theta)^4 \, d\theta}{\big( \frac{1}{2\pi} \int_0^{2\pi} R(\theta)^2 \, d\theta \big)^2 } = \frac{2\pi I_\text{origin}}{A^2} , 
\end{align*}
where $I_\text{origin}=\int_\Omega |x|^2 \, dx$ is the polar moment of inertia of $\Omega$ about the origin. Obviously
\[
G_0 \geq 1  \qquad \text{and} \qquad G_1 \geq 1
\]
with equality if and only if the domain is a disk centered at the origin $(R \equiv \text{const}$).

\begin{figure}
\begin{center}
\begin{tikzpicture}[scale=1.3]
  \draw (0,0) +(0:1) coordinate (g) -- +(45:2) coordinate (a) -- +(105:0.5) coordinate (b) -- +(145:2) coordinate (e) -- +(180:1) coordinate (f) -- +(240:1.6) coordinate (c) -- +(260:0.5) coordinate (d) -- +(320:1.5) -- +(0:1);

  \draw (0,0) -- ($(a)!0.5!(b)$) coordinate (x) 
  node [right=-2pt,pos=0.6] {\scriptsize $R(\theta)$};




  \fill (0,0) circle (0.03) node [below right=-2pt] {\small $0$};
  
  \draw[dashed] (0,0) -- (1,0);
  \clip (0,0) -- (1,0) -- (x)  -- cycle;
  \draw (0,0) circle (0.5);
  \draw (0.1,0) node [above right] {\scriptsize $\theta$};
\end{tikzpicture}
\end{center}
\caption{A starlike domain with radius function $R(\theta)$.}
\label{fig:starlike}
\end{figure}
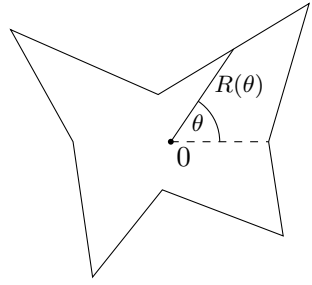

Take the maximum of the two geometric factors, and call it $G$: 
\[
G = \max \{ G_0,G_1 \} \geq 1. 
\]
We interpret $G$ as measuring the deviation of the domain from roundness. Deviation can occur in two ways: an oscillatory boundary would make $R^\prime$ large and hence $G_0$ large, whereas an elongated boundary (such as an eccentric ellipse) would force $R^4$ to vary more than $R^2$ and hence would make $G_1$ large. Calculations are generally required in order to determine which of $G_0$ or $G_1$ is larger (see \cite[Section 10]{LS13}).

\section{\bf Main results}
\label{sec:results}

\subsection*{Dirichlet boundary conditions}
Our main result says that the disk maximizes eigenvalues of the magnetic Laplacian under suitable geometric scaling normalized by area and $G$. 
\begin{theorem}[Dirichlet magnetic Laplacian] \label{Deigen}
Suppose $\Omega = \{ re^{i\theta} : 0 \leq r < R(\theta) \}$ is a Lipschitz-starlike plane domain. 
Fix $\beta \in \R$ and $n \geq 1$. 

Then each of the following scale invariant eigenvalue functionals achieves its maximum value when the domain $\Omega$ is a centered disk:
\[
\lambda_1 A/G, \qquad
(\lambda_1^s + \cdots + \lambda_n^s)^{1/s} \, A/G, \qquad
\sqrt[n]{\lambda_1 \lambda_2 \cdots \lambda_n} \, A/G,
\]
for each exponent $0 < s \leq 1$. Further, if $\Phi : \R_+ \to \R$ is concave and increasing then $\sum_{j=1}^n \Phi(\lambda_j A/G)$ is maximal when $\Omega$ is a disk centered at the origin. 

Hence each partial sum of the spectral zeta function and of the trace of the heat kernel is minimal when $\Omega$ is a centered disk. That is, the functionals
\[
\sum_{j=1}^n (\lambda_j A/G)^s \qquad \text{and} \qquad
\sum_{j=1}^n \exp(-\lambda_j At/G)
\]
attain their smallest value when $\Omega$ is a centered disk, for each $s<0<t$.

Equality statement: if $\lambda_1 A /G \big|_\Omega = \lambda_1 A /G \big|_\D$ then $\Omega$ is a centered disk.
\end{theorem}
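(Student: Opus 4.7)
The plan is to adapt the radial transplantation method that Laugesen and Siudeja developed for the non-magnetic Laplacian, supplementing it with ingredients to control the magnetic vector potential. By scale invariance of the normalized eigenvalue $\lambda_j A$, it suffices to compare $\Omega$ against the disk $\Omega^* = \{|y| < R^*\}$ of the same area, where $R^* = \sqrt{A/\pi}$. Define the Lipschitz radial rescaling
\[
T \colon \Omega \to \Omega^*, \qquad T(re^{i\theta}) = \frac{rR^*}{R(\theta)}\, e^{i\theta},
\]
which sends $\partial\Omega$ to $\partial\Omega^*$. Let $u_1, \dots, u_n$ be Dirichlet magnetic eigenfunctions on $\Omega^*$ for the same flux $\beta$ and vector potential $F^*(y) = \tfrac{\beta}{2A}(-y_2,y_1)$; because the disk problem is rotationally invariant, one may take each $u_j$ to be an angular momentum mode $g_j(\rho) e^{im_j\theta}$, so that $|u_j|^2$ is radial. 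Form the trial functions $v_j = u_j \circ T \in H^1_0(\Omega)$.

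The main analytic step is to estimate the magnetic Rayleigh quotient of any $v = u \circ T$ with $u \in \operatorname{span}(u_1, \dots, u_n)$. Expanding
\[
|(i\nabla + F) v|^2 = |\nabla v|^2 + 2\Imag(\bar v\, F \cdot \nabla v) + |F|^2 |v|^2
\]
and changing variables $y = T(x)$, the three pieces are treated separately. The kinetic part is handled by the non-magnetic Laugesen--Siudeja computation: the Jacobian of $T$ is $(R^*/R)^2$, and the angular derivative of $T$ produces the integral $\int (\log R)'^2 \, d\theta$ that appears in $G_0$, giving $\int_\Omega |\nabla v|^2 \, dx \leq G_0 \int_{\Omega^*} |\nabla u|^2 \, dy$ (after exploiting that $|u_j|^2$ is radial). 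For the magnetic piece $\tfrac{\beta^2}{4A^2} \int_\Omega r^2 |v|^2 \, dx$, the change of variables yields an integral weighted by $R(\theta)^2$; combined with the $L^2$-normalization (the weight $(R/R^*)^2$ integrates to $2\pi$, so $\|v\|_{L^2(\Omega)} = \|u\|_{L^2(\Omega^*)}$), this contributes exactly a factor of $G_1$. The cross term $\Imag(\bar v F \cdot \nabla v)$ is handled using the rotational form $F \cdot dx = \tfrac{\beta}{2A} r^2 \, d\theta$, which localizes in the angular variable and again produces $R^2$-weights bounded by $G_1$. Pairing the magnetic and kinetic estimates via Cauchy--Schwarz at the quadratic-form level yields
\[
\int_\Omega |(i\nabla + F) v|^2 \, dx \leq G \int_{\Omega^*} |(i\nabla + F^*) u|^2 \, dy, \qquad \int_\Omega |v|^2 \, dx = \int_{\Omega^*} |u|^2 \, dy.
\]

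From this Rayleigh-quotient estimate, the Poincar\'e--Ky Fan min-max principle applied to the $k$-dimensional subspace spanned by $v_1, \dots, v_k$ gives $\sum_{j=1}^k \lambda_j(\Omega)/G \leq \sum_{j=1}^k \lambda_j(\Omega^*)$ for every $k \leq n$. Hardy--Littlewood--Polya majorization for nonnegative nondecreasing sequences whose partial sums are ordered then upgrades this to $\sum_{j=1}^n \Phi(\lambda_j A/G) \leq \sum_{j=1}^n \Phi(\lambda_j(\Omega^*) A)$ for every concave increasing $\Phi \colon \R_+ \to \R$. All other functionals in the theorem follow by specializing $\Phi$: the identity gives the arithmetic mean, $\Phi(x) = x^s$ for $0 < s \leq 1$ gives the $s$-means, $\Phi = \log$ gives the geometric mean, $\Phi(x) = -x^s$ for $s < 0$ gives the spectral-zeta sums, and $\Phi(x) = -e^{-tx}$ for $t > 0$ gives the heat-trace sums. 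For the equality case, saturating $\lambda_1 A/G\big|_\Omega = \lambda_1(\D)|\D|$ forces the trial-function inequality to be tight at the ground state; unwinding the kinetic and magnetic estimates separately then forces $G_0 = G_1 = 1$, so $R$ is constant and $\Omega$ is a centered disk.

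The principal obstacle is the magnetic part of the Rayleigh quotient, which has no analogue in the non-magnetic theory. The vector potentials $F$ and $F^*$ do \emph{not} intertwine under $T$, so a naive pull-back leaves residual terms. The saving feature is the rotational symmetry of $F$, which lets every error term be written as an integral of $R^2$ or $R^4$ over $[0,2\pi]$ and bounded by $G_1$. Achieving the sharp aggregate factor $G = \max(G_0, G_1)$ rather than $G_0 + G_1$ or $G_0 G_1$ requires estimating the magnetic cross term against the kinetic term by Cauchy--Schwarz before optimizing, so that the two contributions share the same normalization and combine at the level of the maximum.
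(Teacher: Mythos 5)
Your high-level strategy (radial transplantation plus Hardy--Littlewood--P\'olya majorization) is the right one, but the analytic core of the argument is wrong in several places that matter.

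\textbf{The transformation.} You take $T(re^{i\theta}) = \tfrac{rR^*}{R(\theta)}e^{i\theta}$, which fixes the angle and so has Jacobian $(R^*/R(\theta))^2$ depending on $\theta$. The paper instead uses a map that is linear on rays \emph{and} deforms the angle, $T(r,\theta)=(r/R(\theta),\phi(\theta))$ with $\phi'(\theta)=R(\theta)^2\pi/A$, so that $\operatorname{Jac}(T)\equiv\pi/A$ is constant. The constant Jacobian is not cosmetic: it is exactly what makes the transplanted functions pairwise $L^2$-orthogonal, $\int_\Omega v_j\overline{v_k}\,dx=\operatorname{Jac}(T^{-1})\int_\D u_j\overline{u_k}\,dx=0$. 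Your choice of ``angular momentum modes'' $g_j(\rho)e^{im_j\theta}$ makes $|u_j|^2$ radial and salvages the \emph{normalization} of $v_j$, but not the orthogonality: for $m_j\neq m_k$ the product $v_j\overline{v_k}$ picks up a factor $e^{i(m_j-m_k)\theta}(R(\theta)/R^*)^2$ whose $\theta$-integral is a nonzero Fourier coefficient of $R^2$ in general. Without pairwise orthogonality the Rayleigh--Poincar\'e (Ky Fan) principle does not apply, so the chain from the Rayleigh-quotient estimate to $\sum_{j\le k}\lambda_j(\Omega)\le G\sum_{j\le k}\lambda_j(\Omega^*)$ is broken.

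\textbf{The missing rotation average.} The paper's trial functions are $v_j=u_j\circ U^{-1}\circ T$ for an \emph{arbitrary} rotation $U$ of the disk, and the crucial step is averaging the Rayleigh quotient over all rotation angles $\eta$. This averaging is what (i) makes the cross term $Q_2$ vanish, since the $\theta$-integrand factors into something $\eta$-independent times $R(\theta)R'(\theta)$, whose integral over a period is zero, and (ii) produces exactly $G_0$ multiplying the radial energy $\int_\D|u_s|^2$ and $G_1$ multiplying the angular energy. Your argument has no such averaging, and none of the identities you invoke hold pointwise in the rotation parameter.

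\textbf{The decomposition.} Your splitting $|(i\nabla+F)v|^2=|\nabla v|^2+2\operatorname{Im}(\bar v\,F\cdot\nabla v)+|F|^2|v|^2$ does not align with the $G_0/G_1$ dichotomy. The claim $\int_\Omega|\nabla v|^2\,dx\le G_0\int_{\Omega^*}|\nabla u|^2\,dy$ is false even for $\beta=0$: in polar coordinates $|\nabla v|^2=|v_r|^2+r^{-2}|v_\theta|^2$, and transplantation along rays sends the \emph{radial} part to a $G_0$-weighted quantity but the \emph{angular} part to a $G_1$-weighted one. The correct split, used in the paper, is radial derivative versus angular covariant derivative: $|v_r|^2$ yields $G_0\int_\D|u_s|^2$ and $|ir^{-1}v_\theta+\tfrac{\beta}{2A}rv|^2$ yields $G_1\int_\D|is^{-1}u_\phi+\tfrac{\beta}{2\pi}su|^2$, with the vector potential absorbed into the angular piece because both point in $\mathbf{e}_\theta$. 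One then writes the bound as the convex combination $(1-\alpha_j)G_0+\alpha_j G_1$, where $\alpha_j\in[0,1]$ is the angular fraction of the $j$th mode's energy, and bounds it by $G=\max\{G_0,G_1\}$. Your appeal to ``Cauchy--Schwarz before optimizing'' has no mechanism to produce $\max(G_0,G_1)$; a Cauchy--Schwarz treatment of the cross term would naturally give $\sqrt{G_0G_1}$-type factors and would not recover the sharp disk constant.

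\textbf{The equality statement.} Saturating the inequality does not directly force $G_0=G_1=1$; rather, it forces a transplanted trial function to be an actual eigenfunction of $\Omega$. The paper then takes imaginary parts of the eigenvalue equation to conclude $\beta v_\theta=0$, uses radiality of the disk ground state to write $v(x)=J(r/R(\theta))$, and deduces $R'\equiv0$. Your sketch of the equality case skips this step, and the claim that tightness alone gives $G_0=G_1=1$ is not justified.

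In summary, the three genuinely new ingredients needed are: the area-preserving angular deformation $\phi(\theta)$, the average over all rotations of the disk, and the radial-versus-angular split of the magnetic Rayleigh quotient. Your proposal is missing all three, and the gaps are not repairable within the decomposition you chose.
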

The proof appears in \autoref{Deigen_proof}. 

Eigenvalues of the magnetic Laplacian on a disk are extremal, in the theorem. The proof does not need formulas for them, though they  can be computed in terms of zeros of certain Kummer functions --- see the detailed treatment by Son \cite{S14}, which includes informative plots of the eigenvalues as functions of the flux $\beta$.

The theorem can be strengthened by replacing the maximum of $G_0$ and $G_1$, which we call $G$, with certain convex combinations of $G_0$ and $G_1$: see our discussion in the case of zero magnetic field \cite[Section 9]{LS13} . Further improvements can be made by choosing a ``good''  location for the origin, so as to reduce the values of $G_0$ and $G_1$ \cite[Section 10]{LS13}.

We prove the theorem by transforming $\Omega$ into a disk while controlling angular information in the Rayleigh quotient of the magnetic Laplacian. Our transformation is linear on rays and has constant Jacobian, as  indicated in \autoref{fig:transplant}. Note that wherever the transformation stretches radially it must compress angularly, in order to preserve area; this constant Jacobian condition guarantees that when we transplant orthogonal eigenfunctions from the disk we will obtain orthogonal trial functions on $\Omega$. 

One cannot know which orientation yields the smallest value for the Rayleigh quotient of our trial function in $\Omega$, and in any case the optimal orientation will typically differ for each index $j$. We aim instead for the average case: we consider all possible orientations of the trial function in $\Omega$ by employing the arbitrary rotation $U$ of the disk in \autoref{fig:transplant}.

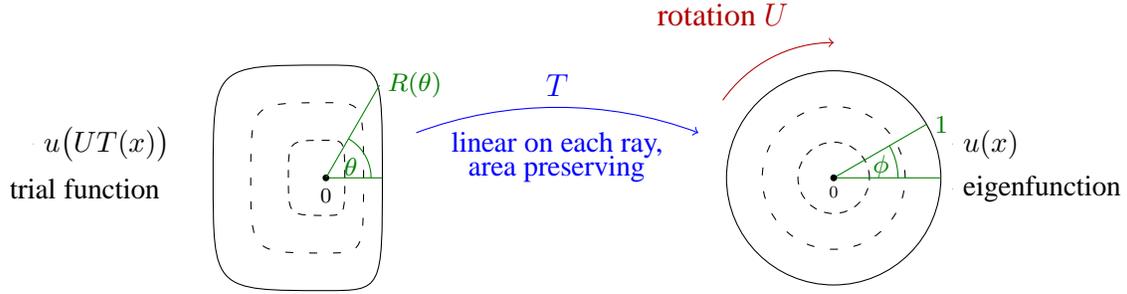
\begin{figure}
\begin{center}
\begin{tikzpicture}[scale=1.5]
  \begin{scope}[xshift=5cm]
        \draw[red!70!black,->] (145:1.2) node[above=24pt] {rotation $U$} arc (145:90:1.2);
    \draw (0,0) circle (0.95);
    \fill (1.05,0.3) circle (0) node [right] {\small $u(x)$};
        \fill (1.05,-0.1) circle (0) node [right] {\small eigenfunction};
    \draw[loosely dashed] (0,0) circle (0.95*2/3);
    \draw[dashed] (0,0) circle (0.95/3);
    \draw [green!50!black] (0,0) -- (0.95,0);
    \begin{scope}
      \draw [green!50!black](0,0) -- (30:0.95) node [right=-2pt,pos=1,scale=1.2] {\tiny $1$};
      \clip (0,0) -- (30:0.95) -- (0.95,0) --cycle;
      \draw [green!50!black] (0,0) circle (0.57);
      \draw[green!50!black] (0.42,0.1) node[scale=1.2] {\tiny $\phi$};
    \end{scope}
    \fill (0,0) circle (0.03) node [below,scale=0.8] {\tiny $0$};
  \end{scope}
    \draw[->,blue] (1.3,0.4) .. controls (2.1,0.7) and (3,0.7) .. (3.8,0.4) node [above,pos=0.5] {$T$} node[below=5pt,pos=0.5] {\small linear on each ray,} node[below=14pt,pos=0.5] {\small area preserving};
  \begin{scope}[xscale=-1,xshift=-5.5cm]
    \draw[smooth] (4.5,0) .. controls (4.5,1) .. (5,1) .. controls (6,1) .. (6,0) .. controls (6,-1) .. (5,-1) .. controls (4.5,-1) .. (4.5,0);
    \draw[smooth,loosely dashed,xshift=-5cm,scale=2/3,xshift=10cm] (4.5,0) .. controls (4.5,1) .. (5,1) .. controls (6,1) .. (6,0) .. controls (6,-1) .. (5,-1) .. controls (4.5,-1) .. (4.5,0);
    \draw[smooth,dashed,xshift=-5cm,scale=1/3,xshift=25cm] (4.5,0) .. controls (4.5,1) .. (5,1) .. controls (6,1) .. (6,0) .. controls (6,-1) .. (5,-1) .. controls (4.5,-1) .. (4.5,0);
  \begin{scope}[xscale=-1,xshift=-10cm]
    \draw[green!50!black] (5,0) -- (5.5,0);
    \fill (5,0) circle (0.03) node [below] {\tiny $0$};
        \fill (2.4,0.3) circle (0) node [right] {\small $u \big( UT(x) \big)$};
         \fill (2.1,-0.1) circle (0) node [right] {\small trial function};
   \draw[green!50!black] (5,0) -- ++(60:0.95) node [right=-2pt,pos=1,scale=1.2] {\tiny $R(\theta)$};
    \clip (5,0) -- +(1,0) -- +(60:1) -- cycle;
    \draw[green!50!black] (5,0) circle (0.4);
    \draw[green!50!black] (5.22,0.1) node [scale=1.2] {\tiny $\theta$};
  \end{scope}
  \end{scope}
\end{tikzpicture}
\end{center}
   \vspace{-0.3cm}
\caption{A linear-on-rays transformation from a domain $\Omega$ of area $\pi$ to the unit disk. To insure that the mapping preserves area locally, we require $R(\theta)^2 \, d\theta = d\phi$.}
\label{fig:transplant}
\end{figure}

\subsection*{Perturbations of the disk}
Let us apply the theorem to the ground state energy of a nearly circular domain. Suppose $P(\theta)$ is a Lipschitz continuous, $2\pi$-periodic function with Fourier series  
\[
P(\theta) = \sum_{n \in \Z} p_n e^{in\theta} ,
\]
where $p_{-n}=\overline{p_n}$ since $P$ is real-valued. Define a plane domain $\Omega_\e = \{ re^{i\theta} : 0 \leq r < 1+\e P(\theta) \}$, and assume $\e$ is small enough that the radius $1+\e P(\theta)$ is positive for all $\theta$. Obviously $\Omega_\e$ is a perturbation of the unit disk $\D$, when $\e$ is small. Write 
\[
\lambda_\e=\lambda_1(\Omega_\e)
\]
for the first Dirichlet eigenvalue of the magnetic Laplacian on $\Omega_\e$. Let $A_\e$ be the area of the domain, and remember that the flux through each domain $\Omega_\e$ is the same, namely $\beta$.
\begin{corollary}[Nearly circular domains] \label{perturb2}
The first magnetic eigenvalue of the domain $\Omega_\e$ is bounded above and below in terms of the boundary perturbation, with 
\[
1 \leq \frac{\lambda_\e A_\e}{\lambda_1(\D) \pi}
\leq 1 + 2 \e^2 \max \left\{ \sum_{n=1}^\infty n^2 |p_n|^2 \, , \, 4 \sum_{n=1}^\infty |p_n|^2 \right\} + O(\e^3) = G(\Omega_\e)
\]
as $\e \to 0$ with $P$ fixed.
\end{corollary}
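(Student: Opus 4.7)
My plan is to deduce the two inequalities directly from results already in the paper, and then to compute the expansion of $G(\Omega_\e)$ to order $\e^2$. The lower bound $\lambda_\e A_\e \geq \lambda_1(\D)\pi$ is Erd\"os's Faber--Krahn inequality for the magnetic Dirichlet Laplacian \cite{E96}, combined with the scale invariance of $\lambda_1 A$ (so that the disk $\Omega_\e^*$ of area $A_\e$ satisfies $\lambda_1(\Omega_\e^*) A_\e = \lambda_1(\D)\pi$). The upper bound is the $n=1$, $\Phi = \text{identity}$ case of \autoref{Deigen}, which says $\lambda_\e A_\e/G(\Omega_\e) \leq \lambda_1(\D)\pi$.

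The substantive part is the Taylor expansion of $G(\Omega_\e) = \max\{G_0, G_1\}$ with $R(\theta) = 1 + \e P(\theta)$. For $G_0$, I plan to expand $\log R = \e P - \tfrac{1}{2}\e^2 P^2 + O(\e^3)$, differentiate in $\theta$, and square, obtaining $(\log R)'(\theta)^2 = \e^2 P'(\theta)^2 + O(\e^3)$. Plancherel's identity applied to $P'(\theta) = \sum_{n \in \Z} in\, p_n e^{in\theta}$, together with $|p_{-n}| = |p_n|$, gives
\[
G_0 = 1 + \e^2 \cdot \tfrac{1}{2\pi}\!\int_0^{2\pi} P'(\theta)^2\, d\theta + O(\e^3) = 1 + 2\e^2 \sum_{n=1}^\infty n^2 |p_n|^2 + O(\e^3).
\]

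For $G_1$ I plan to expand the numerator and denominator separately. A direct computation yields
\[
\tfrac{1}{2\pi}\!\int_0^{2\pi} R^4\, d\theta = 1 + 4\e p_0 + 6\e^2 \sum_{n \in \Z} |p_n|^2 + O(\e^3),
\]
\[
\bigl(\tfrac{1}{2\pi}\!\int_0^{2\pi} R^2\, d\theta\bigr)^2 = 1 + 4\e p_0 + 2\e^2 \sum_{n \in \Z} |p_n|^2 + 4\e^2 p_0^2 + O(\e^3).
\]
The constant and $O(\e)$ terms cancel in the difference, and after dividing by the denominator (which is $1 + O(\e)$) the $\e^2$ coefficient of $G_1 - 1$ reduces to $4\bigl(\sum_{n \in \Z} |p_n|^2 - p_0^2\bigr) = 8 \sum_{n=1}^\infty |p_n|^2$. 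Hence $G_1 = 1 + 2\e^2 \cdot 4 \sum_{n=1}^\infty |p_n|^2 + O(\e^3)$; taking the maximum with $G_0$ and absorbing remainders into a single $O(\e^3)$ term produces the stated formula for $G(\Omega_\e)$.

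The argument is a bookkeeping exercise and presents no real obstacle. The only noteworthy point is the cancellation in the ratio defining $G_1$, which removes the $p_0$ dependence and isolates precisely the second moment of $P$ about its mean; this explains the absence of $p_0$ in the final bound.
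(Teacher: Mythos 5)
Your proposal is correct and follows essentially the same route as the paper: the lower bound from Erd\"os, the upper bound from Theorem~\ref{Deigen} with $n=1$, and a second-order Taylor expansion of $G_0$ and $G_1$ in $\e$, converted via $p_{-n}=\overline{p_n}$ into one-sided sums. The only cosmetic difference is that you carry out the numerator/denominator expansion for $G_1$ explicitly and exhibit the cancellation of the $p_0$ terms, whereas the paper states the resulting identity $G_1 = 1 + 4\e^2\,\tfrac{1}{2\pi}\!\int_0^{2\pi}(P-p_0)^2\,d\theta + O(\e^3)$ directly.
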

The lower bound in the corollary is the Faber--Krahn type inequality due to Erd\"{o}s \cite{E96}. The upper bound follows from \autoref{Deigen}, as we show in \autoref{sec:formal}. 

To continue our investigation of nearly circular domains, we perform a formal perturbation analysis on the ground state energy. Write $M(a,b,z)$ for the Kummer function, also known as the confluent hypergeometric function  \cite[Chapter 13]{NIST}, and denote its $z$-derivative by $M^\prime$. For $\beta > 0$ we let
\[
a_0 = a_0(\beta) = \frac{1}{2} (1-\lambda_0 \pi/\beta) .
\]
\begin{theorem}[Perturbation] \label{th:perturb}
Fix $\beta > 0$ and assume $p_0=0$. Perturbation analysis yields the following formal asymptotic series for $\lambda_\e A_\e$ as $\e \to 0$ (with $P$ fixed):
\begin{align}
\lambda_\e A_\e = \lambda_1(\D) \pi +\Big( \sum_{n=2}^\infty c |p_n|^2 q_n \Big)  \e^2 + O(\e^3) , \label{eq:pertseries}
\end{align}
where
\begin{align*}
z & = \frac{\beta}{2\pi} , \\
c & = \frac{- M^\prime(a_0,1,z)}{\frac{\partial M}{\partial a}(a_0,1,z)} \, \frac{4\beta^2}{\pi} , \\
q_n & = 1+n -z + z ( \log M )^\prime(a_0,n+1,z) + z ( \log M )^\prime(a_0+n,n+1,z) .
\end{align*}
\end{theorem}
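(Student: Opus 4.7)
The plan is classical Rayleigh--Schr\"{o}dinger perturbation theory, after transplanting the Dirichlet magnetic eigenproblem from the $\e$-dependent domain $\Omega_\e$ onto the fixed unit disk $\D$. First I would introduce the diffeomorphism $T_\e\colon\D\to\Omega_\e$ defined in polar coordinates by $T_\e(r,\theta)=((1+\e P(\theta))r,\theta)$ and pull back trial functions and the magnetic Rayleigh quotient through $T_\e$. The result is an eigenproblem on $\D$ for a family of self-adjoint operators $L_\e=L_0+\e L_1+\e^2 L_2+O(\e^3)$ in the weighted space $L^2(\D,(1+\e P)^2\,dx)$, where $L_0=(i\nabla+F_0)^2$ is the magnetic Laplacian on $\D$ with potential $F_0(x)=\tfrac{\beta}{2\pi}(-x_2,x_1)$, and $L_1$, $L_2$ are first-order differential operators whose coefficients depend on $(P,P')$ linearly and quadratically, respectively.

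Next I would use the separable structure of $L_0$ on $\D$. Rotational symmetry provides a complete basis of eigenfunctions $u_{m,k}(r,\theta)=f_{m,k}(r)e^{im\theta}$, $m\in\Z$, $k\geq 0$, with each $f_{m,k}$ a Kummer function whose parameters are determined by $m$ and the eigenvalue. Writing $\lambda_0=\lambda_1(\D)$, the unperturbed ground state is radial,
\[
u_0(r)=C\exp\!\bigl(-\beta r^2/(4\pi)\bigr)\,M\bigl(a_0,1,\beta r^2/(2\pi)\bigr),
\]
and the boundary condition $u_0(1)=0$ translates into $M(a_0,1,z)=0$ with $z=\beta/(2\pi)$. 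Inserting the ansatz $\lambda_\e=\lambda_0+\e\lambda^{(1)}+\e^2\lambda^{(2)}+O(\e^3)$, $u_\e=u_0+\e u^{(1)}+\e^2 u^{(2)}+O(\e^3)$ and applying the Fredholm alternative at orders $\e$ and $\e^2$ yields
\[
\lambda^{(1)}=\langle L_1 u_0,u_0\rangle,\qquad \lambda^{(2)}=\langle L_2 u_0,u_0\rangle-\langle L_1 u_0,(L_0-\lambda_0)^{-1}L_1 u_0\rangle,
\]
with the resolvent acting on the orthogonal complement of $u_0$. Since the coefficients of $L_1$ are built from $P$ and $P'$, both of which have zero angular mean under the hypothesis $p_0=0$, angular orthogonality against the radial $u_0$ forces $\lambda^{(1)}=0$.

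The shape $\sum c|p_n|^2 q_n$ emerges from Fourier decomposition in $\theta$. Because $P=\sum_{n\neq 0}p_n e^{in\theta}$, the function $L_1 u_0$ lives in the angular harmonics $e^{\pm in\theta}$ with weights proportional to $p_{\pm n}$, so the resolvent in $\lambda^{(2)}$ diagonalizes over $n$. Mode $+n$ requires inverting $L_0^{(+n)}-\lambda_0$, which is an inhomogeneous confluent hypergeometric ODE on $(0,1)$ with Kummer parameters $(a_0+n,n+1)$, whereas mode $-n$ (for $\beta>0$) leads to the partner ODE with parameters $(a_0,n+1)$, because the Kummer first parameter in angular mode $m$ at spectral value $\lambda_0$ equals $a_0$ for $m<0$ but $a_0+|m|$ for $m>0$. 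Combining the two modes with the pairing $p_{-n}=\overline{p_n}$ produces the factor $|p_n|^2$ together with the two logarithmic derivatives appearing in $q_n$. Finally I would expand $A_\e=\pi+2\pi\e^2\sum_{n\geq 1}|p_n|^2+O(\e^3)$ by Parseval and combine with $\lambda_\e$; the mode $n=1$ contribution vanishes from the combined expression because the first Fourier harmonic of $P$ corresponds (to leading order) to a rigid translation of the disk, under which the constant-field magnetic eigenvalues are invariant. This is why the sum in \eqref{eq:pertseries} begins at $n=2$.

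The main obstacle is the explicit inversion of $L_0^{(n)}-\lambda_0$ and the matching of the output to the stated $q_n$. The key algebraic tool is differentiation of Kummer's equation in its first parameter: $\partial_a M(a,b,z)$ satisfies an inhomogeneous Kummer equation whose right-hand side is proportional to $M(a,b,z)$, which is exactly the resolvent kernel needed here. Combining $\partial_a M$ with its linearly independent Kummer partner through variation of parameters, and imposing regularity at $r=0$ together with the Dirichlet condition at $r=1$, produces a radial Green's function; pairing $L_1 u_0$ against this Green's function and integrating by parts once collapses the integral to boundary evaluations at $r=1$, which are precisely the logarithmic derivatives $(\log M)'(a_0,n+1,z)$ and $(\log M)'(a_0+n,n+1,z)$ displayed in $q_n$. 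The overall constant $c=-\frac{4\beta^2}{\pi}M'(a_0,1,z)/\partial_a M(a_0,1,z)$ packages the $L^2$ normalization $C^2$ of $u_0$ (the Hellmann--Feynman identity applied to the implicit equation $M(a_0,1,z)=0$ expresses $C^2$ as a ratio of two derivatives of $M$) together with the $\beta$-dependent prefactors collected from $L_1$, $L_2$, and the area expansion.
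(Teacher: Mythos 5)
Your proposal is a legitimate route, but it is genuinely different from what the paper does, and at one point it is subtly misleading.

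The paper does \emph{not} pull the eigenvalue problem back to a fixed domain and run Rayleigh--Schr\"odinger theory on a family of operators $L_\e=L_0+\e L_1+\e^2 L_2+\cdots$. Instead it sidesteps the resolvent entirely: it writes the candidate eigenfunction as the \emph{exact} interior solution
\[
u(r,\theta)=f_0\big(r\sqrt{\pi/A},\lambda A\big)+\e\sum_{n\ne0}c_n\,f_n\big(r\sqrt{\pi/A},\lambda A\big)e^{in\theta},
\]
where each $f_n$ is the regular Kummer solution of the magnetic eigenvalue ODE in angular mode $n$ at spectral parameter $\lambda$. Because $(i\nabla+F)^2u=\lambda u$ holds identically in the plane for this ansatz, the only equation to expand in $\e$ is the boundary condition $u(R(\theta),\theta)=0$. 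Taylor expanding that single scalar equation, matching at orders $\e^0,\e^1,\e^2$, and then averaging the $\e^2$ equation over $\theta$ produces $\rho=0$, the coefficients $c_n=-p_n f_0^{1,0}/f_n$, and the explicit $\tau=\sum c|p_n|^2q_n$, all without ever inverting $L_0-\lambda_0$. Your plan, by contrast, does require the resolvent; what you propose (variation of parameters using $\partial_aM$, collapsing the integral by integration by parts to boundary data) should indeed reproduce the logarithmic derivatives in $q_n$, but it is substantially more work than the paper's boundary-matching device. The computation you sketch is also somewhat delicate because the transplanted operator family acts in the $\e$-dependent weighted space $L^2(\D,(1+\e P)^2\,dx)$; the paper avoids this issue too. (Minor note: your stated correspondence --- Kummer first parameter $a_0+|m|$ for $m>0$ and $a_0$ for $m<0$ --- is reversed relative to the paper's $f_n$; this is a harmless sign/gauge convention difference, since the final $q_n$ symmetrizes over $\pm n$.)

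The one place where your proposal needs more care is the vanishing of the $n=1$ term. You argue that the $\cos\theta$ mode ``corresponds (to leading order) to a rigid translation'' of the disk, so it contributes nothing; but as stated that could be read as asserting that $\lambda_\e A_\e$ does not increase at all under such a perturbation, which is false --- the paper's own remark stresses that Erd\H{o}s's magnetic Faber--Krahn theorem forces a strict increase, just at order higher than $\e^2$. Your translation heuristic can be turned into a correct argument (the family $\{r<1+\e\cos\theta\}$ differs from the translated-and-rescaled disk of equal area only by a mean-zero $\cos2\theta$ boundary perturbation of size $\e^2$, whose first-order shape derivative of $\lambda A$ at the disk vanishes), but you should spell that out. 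The paper proves $q_1=0$ differently: it evaluates $q_1$ directly from its Kummer-function expression using the contiguous relations $M(a_0+1,2,\cdot)=M^\prime(a_0,1,\cdot)/a_0$, $M(a_0,2,\cdot)=\big(M^\prime-M\big)(a_0,1,\cdot)/(a_0-1)$, the boundary condition $M(a_0,1,z)=0$, and Kummer's differential equation, obtaining an exact algebraic cancellation.
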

The assumption $p_0=0$ is harmless, since it essentially amounts to a rescaling of the perturbed domain. The positivity assumption on $\beta$ is for convenience only, and imposes no genuine restriction since $-\beta$ and $+\beta$ yield the same eigenvalues (the energy levels are independent of the direction of the magnetic field). 

We prove \autoref{th:perturb} in \autoref{sec:formal}, and show there that $q_n = n+O(1)$ as $n \to \infty$. Thus the $\e^2$-term in the asymptotic series \eqref{eq:pertseries} involves the $H^{1/2}$-norm of the boundary perturbation, while the second order term in \autoref{perturb2} is essentially the $H^1$-norm. Hence the asymptotic formula is in that sense sharper, although on the other hand we have no control over its error term. It is an open problem to prove an inequality (or error estimate) that captures the asymptotic series to second order. This problem is open even for the Laplacian ($\beta=0$). 

\begin{remark}\rm 
The summation in \eqref{eq:pertseries} begins with $n=2$, which leads one to ask: might the first eigenvalue actually decrease under boundary perturbations of type $n=1$, that is, $\cos \theta$ or $\sin \theta$? No! The ground state energy increases under such perturbations, as follows from the magnetic Faber--Krahn result of Erd\"{o}s (the lower bound in \autoref{perturb2}). This observation highlights the subtlety of Erd\"{o}s's result, and of the original Faber--Krahn theorem in the nonmagnetic case ($\beta=0$). 
\end{remark}

\subsection*{Neumann boundary conditions}
Assume the magnetic field is nonzero in what follows, meaning $\beta \neq 0$. (The zero field case was treated in the earlier paper \cite{LS13}.) Write $\{ \mu_j \}$ for the Neumann eigenvalues of the magnetic Laplacian on $\Omega$, so that the corresponding $L^2$-orthonormal eigenfunctions $u_j$ satisfy
\[
\begin{cases}
(i \nabla + F)^2 u_j = \mu_j u_j \;\; \text{in $\Omega$} \\
\hfill \vec{n}  \cdot (\nabla-iF)u = 0 \;\; \text{on $\partial \Omega$}
\end{cases}
\]
where as before, $F(x)=\frac{\beta}{2A}(-x_2,x_1)$. The boundary condition arises naturally from minimization of the Rayleigh quotient, and it plays no role in our proofs. The eigenvalues satisfy
\[
0 < \mu_1 \leq \mu_2 \leq \mu_3 \leq \dots 
\]
where we note that positivity of the first eigenvalue holds because the field is nonzero (see, for example, \cite[Lemma A.8]{LLR12}). 
\begin{theorem}[Neumann magnetic Laplacian] \label{Neigen}
Assume $\beta \neq 0$. Then \autoref{Deigen} holds with Dirichlet eigenvalues replaced by Neumann eigenvalues, except omitting the equality statement from the theorem. 
\end{theorem}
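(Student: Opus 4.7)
The plan is to mirror the Dirichlet proof essentially line for line, since the Rayleigh--Ritz variational characterization operates the same way for Neumann eigenvalues as for Dirichlet ones---only the admissible trial space changes from $H^1_0(\Omega)$ to $H^1(\Omega)$, a weakening rather than a strengthening of the constraint. Every trial function used in the proof of \autoref{Deigen} is therefore automatically admissible in the Neumann setting, so essentially the same construction ought to yield the Neumann version of all the inequalities.

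First I would let $\D^*$ be the centered disk of area $A=|\Omega|$ carrying the same flux $\beta$, and take $\{u_j\}$ to be an $L^2$-orthonormal family of Neumann eigenfunctions of the magnetic Laplacian on $\D^*$. Using the linear-on-rays, area-preserving map $T : \Omega \to \D^*$ together with an arbitrary rotation $U \in SO(2)$ of the disk, as illustrated in \autoref{fig:transplant}, I would form trial functions $v_j^U = u_j \circ U \circ T$ on $\Omega$. The constant Jacobian of $T$ and the isometric nature of $U$ guarantee that $\{v_j^U\}$ is $L^2(\Omega)$-orthonormal---the same key ingredient relied upon in the Dirichlet argument. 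Unlike in the Dirichlet case, there is no boundary condition to preserve under pullback, so $v_j^U$ is immediately a legal Neumann trial function.

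Next I would apply the Ky Fan max-min principle to bound the partial sums $\sum_{j=1}^n \mu_j(\Omega)$ by the sum of Rayleigh quotients of the $v_j^U$, and then repeat the angular--radial decomposition of the magnetic gradient from \autoref{Deigen_proof}. The polar Jacobian identity $R(\theta)^2\,d\theta = d\phi$ transports the integrals to $\D^*$, and averaging over $U$ eliminates the cross terms. The angular piece contributes $G_0$ and the radial piece contributes $G_1$, so taking the maximum yields the geometric factor $G$ in the final inequality. Passage from partial sums to $\sum_{j=1}^n \Phi(\mu_j A/G)$ for general concave increasing $\Phi$ is then a standard Hardy--Littlewood--P\'olya majorization argument, as in the Dirichlet case; products and $\ell^s$-norms follow from the same family of $\Phi$'s, and the spectral zeta and heat trace conclusions from taking $\Phi(t)=-t^s$ with $s<0$ and $\Phi(t)=-e^{-t\tau}$ with $\tau>0$.

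The main obstacle, really a careful check rather than a substantive difficulty, is to verify that no step of the Dirichlet Rayleigh quotient computation secretly invoked the vanishing of trial functions on $\partial\Omega$. Inspection should confirm that the Dirichlet condition was used only to place $v_j^U$ inside $H^1_0(\Omega)$; the pointwise gradient estimates and integration-by-parts identities on rays themselves never used boundary data. The positivity $\mu_1 > 0$ (since $\beta \neq 0$) is what allows functionals like the spectral zeta with negative exponent to make sense. Finally, the equality statement has to be dropped because, for $\beta \neq 0$, the magnetic Neumann ground state on $\D^*$ is generally not rotationally symmetric, so the averaging over $U$ irreversibly destroys the rigidity that drove the Dirichlet equality case.
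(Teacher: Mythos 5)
Your proposal is correct and follows essentially the same route as the paper, which simply notes that the Dirichlet proof goes through verbatim once the trial space is enlarged from $H^1_0(\Omega;\C)$ to $H^1(\Omega;\C)$ and the equality argument is dropped because the magnetic Neumann ground state on the disk need not be radial. One small slip worth flagging: in the decomposition it is the \emph{radial} derivative term $|u_s|^2$ that picks up the boundary-oscillation factor $G_0$ and the angular-plus-field term that picks up the elongation factor $G_1$, not the other way around --- though this does not affect the conclusion since both are bounded by their maximum $G$.
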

The proof goes exactly as for the Dirichlet case in \autoref{Deigen}, except using the trial function space $H^1(\Omega;\C)$ rather than $H^1_0(\Omega;\C)$. The proof of the equality statement breaks down, because the Neumann ground state need not be radial. 

For the disk, the magnetic eigenvalues with Neumann boundary conditions can in principle be computed in terms of Kummer functions, although in practice the equations become rather complicated. The eigenvalue branches display fascinating behavior. For example, a numerical study due to Saint-James \cite{SJ65} reveals that the Neumann ground state has angular dependence $e^{in(\beta)\theta}$, where the number $n(\beta)$ increases to infinity as the flux $\beta$ increases to infinity.

\subsection*{Relevant literature, and the contributions of this paper} Few isoperimetric type inequalities are known for magnetic eigenvalues. This paucity stands in stark contrast to the rich body of work developed for  the nonmagnetic Laplacian over the past century, for which one may consult the surveys by Ashbaugh and Benguria \cite{AB07} or Benguria and Linde \cite{BL08}, and the monographs of Bandle \cite{B80}, Henrot \cite{He06}, Kesavan \cite{K06} and P\'{o}lya--Szeg\H{o} \cite{PS51}. The main contribution of this paper is to prove the first known sharp upper bounds for magnetic spectral functionals on more-or-less general plane domains. 

This paper generalizes our earlier work on eigenvalues of the Laplacian \cite{LS13}, that is, on the case of zero magnetic field. Those earlier results hold in all dimensions, with balls as the maximizers. We restrict in this paper to plane domains, because in three dimensions and higher, introducing a magnetic field creates a preferred direction in the problem, breaking the symmetry and rendering our proof invalid. 

An advantage of working only in the plane is that we can develop a significantly simpler approach than in higher dimensions. The proof of our main result, \autoref{Deigen}, relies on the special fact that rotations commute in two dimensions: we exploit this fact to construct a proof that is both shorter and easier to understand than in our earlier work on the Laplacian. Thus for readers who are new to this subject, we recommend beginning with the proof of \autoref{Deigen} in the zero-field case, taking $\beta=0$ throughout the proof, and only then turning to the magnetic case ($\beta \neq 0$) or to the higher dimensional case in our earlier paper \cite{LS13}. 

Another work to which this current paper owes a debt is that of Laugesen, Liang and Roy \cite{LLR12}. They treated a restricted class of domains, namely linear images of rotationally symmetric domains such as regular polygons, and obtained sharp upper bounds on magnetic eigenvalue sums with the maximizing domains being the original rotationally symmetric domains. For example, the centered equilateral triangle was shown to maximize the eigenvalue sum $(\lambda_1 + \cdots + \lambda_n)A/G_1$ among all triangles. (The authors could have subsequently invoked majorization to pass to spectral functionals such as the partition function, like in this paper, but did not do so.)

One difference between the work of Laugesen \emph{et al.}\ and the current paper is that here we average over the full group of rotations instead of over discrete subgroups such as the $3$-fold rotations for the equilateral triangle. Thus we avoid the tight frame theory that was needed in the earlier paper \cite{LLR12}. Another difference is that the transformations in that paper were rather simple (in fact, globally linear), whereas in the current paper we must use more complicated linear-on-rays transformations such as shown in \autoref{fig:transplant}, in order to map the disk to general starlike domains. This additional complexity forces the inclusion of the boundary oscillation factor $G_0$ in our theorems; it was not needed in the earlier work, since a linear transformation stretches without oscillation. 

Lastly we mention some inequalities related to semi-classical constants. Our results in this paper can be called ``geometrically sharp'', since an extremal domain exists for each spectral functional. 
We call a spectral inequality ``asymptotically sharp''  if it holds with equality in the limit $n \to \infty$, for each domain. An asymptotically sharp inequality of Berezin--Li--Yau type holds for magnetic eigenvalue sums, by work of Erd\"{o}s, Loss and Vougalter \cite{ELV00}, extending results of Laptev and Weidl \cite{LW00}. See also a later work of Frank, Laptev and Molchanov \cite{FLM08}. In the negative direction, the magnetic P\'{o}lya conjecture was disproved by Frank, Loss and Weidl \cite{FLW09} by constructing a counterexample from square domains. 

\subsection*{Open problems}
Erd\"{o}s proved under Dirichlet boundary conditions that the magnetic ground state energy is minimal for a disk of the same area \cite{E96}. In scale invariant terms, he proved $\lambda_1 A$ is minimal for the disk. This result suggests several open problems. 

Is the scale invariant magnetic partition function $\sum_{j=1}^\infty e^{-\lambda_j A t}$ maximal for the disk, for each $t>0$? Luttinger \cite{L73} proved the result for the Laplacian ($\beta=0$). Note that letting $t \to \infty$ would recover the minimality of the first eigenvalue. 
 
Next, for the Neumann spectrum does one have minimality of $\mu_1 A$ for the disk? This conjecture holds trivially for the Laplace operator, because there $\mu_1=0$ for all domains. Thus one should begin by investigating the conjecture for small values of $\beta \neq 0$ (small nonzero fields) using a perturbation analysis. Even if the conjecture holds for such $\beta$ values, it might fail when $\beta$ is larger because the nature of the ground state changes as $\beta$ increases: the Neumann ground state of the disk is radial for small values of $\beta$ but has angular dependence when $\beta$ is large, as was found numerically by Saint-James \cite{SJ65}.

\section{\bf Results for the Pauli operator}
\label{sec:pauli}

To study a charged particle with spin $1/2$, we investigate the energy levels of the Pauli operator $H_P = \big( \sigma \cdot (i\nabla + F)  \big)^2$ (see \cite{E07}, for example). Here $\sigma=(\sigma_1,\sigma_2,\sigma_3)$ is the $3$-tuple of self-adjoint Pauli matrices:
\[
\sigma_1 = \begin{pmatrix} 0 & 1 \\ 1 & 0 \end{pmatrix} , \qquad \sigma_2 = \begin{pmatrix} 0 & -i \\ i & 0 \end{pmatrix} , \qquad \sigma_3 = \begin{pmatrix} 1 & 0 \\ 0 & -1 \end{pmatrix} . 
\]
The Pauli operator acts on spinors, that is, on $2$-component complex vector fields of the form $\psi=\begin{pmatrix} \psiplus \\ \psiminus \end{pmatrix}$. For planar motion with a perpendicular magnetic field $(0,0,\beta/A)$, we may assume the wavefunction $\psi$ is independent of $x_3$ and that the gradient $\nabla=(\partial_1,\partial_2)$ and vector potential $F=(F_1,F_2)=(-x_2,x_1) \beta/2A$ have only  two components. Thus the planar Pauli operator is the formally self-adjoint operator
\[
H_P = \Big( \sum_{m=1}^2 \sigma_m (i\partial_m + F_m) \Big)^2 
\]
with corresponding Rayleigh quotient
\begin{equation} \label{RayPauli}
\Ray_P[\psi] = \frac{\int_\Omega \big| \sum_{m=1}^2 \sigma_m (i\partial_m + F_m) \psi \big|^2 \, dx}{\int_\Omega |\psi|^2 \, dx} . 
\end{equation}
The numerator is known to be elliptic for $\psi \in H^1_0(\Omega;\C^2)$, meaning it is bounded below by a constant times $\int_\Omega |\nabla \psi|^2 \, dx$ minus a constant times $\int_\Omega |\psi|^2 \, dx$. (A brief demonstration of ellipticity is included in \autoref{Peigen_proof}.) Hence the Dirichlet spectrum of the Pauli operator is discrete, by the spectral theorem for quadratic forms. We place the eigenvalues in increasing order, so that
\[
0 < \lambda_1^P \leq \lambda_2^P \leq \lambda_3^P \leq \dots 
\]
where ``$P$'' stands for Pauli. Positivity of the first eigenvalue will be justified in \autoref{Peigen_proof}.

The next theorem provides sharp upper bounds on shifted Dirichlet eigenvalues of the Pauli operator. 
\begin{theorem}[Dirichlet-Pauli operator] \label{Peigen}
\autoref{Deigen} holds with the Dirichlet eigenvalues of the magnetic Laplacian replaced by suitably shifted Dirichlet eigenvalues of the Pauli operator; specifically, one replaces $\lambda_j$ with $\lambda^P_j+|\beta|/A$.
\end{theorem}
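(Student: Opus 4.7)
The plan is to reduce \autoref{Peigen} to \autoref{Deigen} via an algebraic identity for the Pauli operator in two dimensions. Using $\sigma_1\sigma_2 = -\sigma_2\sigma_1 = i\sigma_3$ together with the curl relation $[i\partial_1+F_1,\,i\partial_2+F_2] = i\beta/A$, a direct computation gives
\[
H_P \;=\; (i\nabla+F)^2\,I_{2\times 2} \;-\; \frac{\beta}{A}\,\sigma_3 .
\]
Acting on a spinor $\psi=(\psiplus,\psiminus)^T$ and shifting by $|\beta|/A$ rewrites the Rayleigh quotient as
\[
\Ray_P[\psi] + \frac{|\beta|}{A}
\;=\; \frac{\int_\Omega \bigl|(i\nabla+F)\psiplus\bigr|^2 + \bigl|(i\nabla+F)\psiminus\bigr|^2\,dx \,+\, \kappa_+\|\psiplus\|_2^2 + \kappa_-\|\psiminus\|_2^2}{\|\psi\|_2^2},
\]
where $\kappa_\pm = (|\beta|\mp\beta)/A\ge 0$. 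The shift is exactly what is needed to render the numerator nonnegative, explaining positivity of $\lambda_j^P+|\beta|/A$ and making $\Phi$ well defined on the shifted spectrum.

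\textbf{Trial spinors.} Imitating the proof of \autoref{Deigen}, I would take $\psi^{(1)},\dots,\psi^{(n)}$ to be Dirichlet-Pauli eigenspinors on the unit disk $\D$ corresponding to the first $n$ Pauli eigenvalues, and build trial spinors on $\Omega$ componentwise by
\[
\widetilde\psi^{(j)}(x) \;=\; \psi^{(j)}\bigl(UT(x)\bigr),
\]
with $T\colon\Omega\to\D$ the linear-on-rays, area-preserving map of \autoref{fig:transplant} and $U$ an arbitrary rotation of the disk to be averaged over. Each component inherits the Dirichlet boundary condition on $\partial\Omega$. Since $T$ has constant Jacobian $\pi/A(\Omega)$, the $L^2$ norm of each spin component scales by the \emph{same} factor $A(\Omega)/\pi$; in particular the norm ratios $\|\widetilde\psi^{(j)}_\pm\|_2^2/\|\widetilde\psi^{(j)}\|_2^2$ are preserved exactly, so the zeroth-order mass contributions $\kappa_\pm\|\widetilde\psi^{(j)}_\pm\|_2^2$, when multiplied by $A(\Omega)$, become $(|\beta|\mp\beta)\|\psi^{(j)}_\pm\|_2^2/\|\psi^{(j)}\|_2^2$, depending only on the disk data.

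\textbf{Assembling the bound.} The magnetic gradient terms $\int_\Omega|(i\nabla+F_\Omega)\widetilde\psi^{(j)}_\pm|^2\,dx$ are each scalar quantities, so the argument of \autoref{Deigen} applies to them component by component; after averaging over the rotation $U$ it yields a doubly-stochastic-type relation bounding these averaged quantities by $G(\Omega)$ times the corresponding quantities on $\D$. Adding the two spin components and harmlessly absorbing the transplantation-invariant mass terms into the factor $G(\Omega)\ge 1$ gives, in the weighted matrix sense required by Ky Fan majorization,
\[
\Bigl(\Ray_P^{\Omega}[\widetilde\psi^{(j)}] + \tfrac{|\beta|}{A(\Omega)}\Bigr)\frac{A(\Omega)}{G(\Omega)} \;\leq\; \Bigl(\lambda_j^P(\D) + \tfrac{|\beta|}{\pi}\Bigr)\pi .
\]
Feeding this into the same majorization machinery used in \autoref{Deigen} delivers $\sum_{j=1}^n\Phi\bigl((\lambda_j^P(\Omega)+|\beta|/A(\Omega))A(\Omega)/G(\Omega)\bigr)\le\sum_{j=1}^n\Phi\bigl((\lambda_j^P(\D)+|\beta|/\pi)\pi\bigr)$ for any concave increasing $\Phi$, and hence the power-sum, geometric-mean, zeta and partition-function statements inherited from \autoref{Deigen}.

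\textbf{Main obstacle.} The genuinely new technical point is to confirm that the rotation-averaging bookkeeping of \autoref{Deigen}, which produces a doubly stochastic matrix relating trial-function Rayleigh data on $\Omega$ to eigenvalues on $\D$, goes through unchanged for \emph{vector}-valued (spinor) trial functions. Because the two spin components decouple both in the Rayleigh quotient and under the transplantation, this should amount to applying the scalar argument twice and invoking the block structure of $H_P+|\beta|/A$; one must check along the way that the mass terms $\kappa_\pm\|\psi_\pm\|_2^2$ are compatible with the majorization order so that the factor $G(\Omega)$ applies uniformly, as in the displayed bound. The equality case should go through as in \autoref{Deigen}, since for $\beta>0$ the Dirichlet-Pauli ground spinor on the disk has the form $(\psi_1,0)^T$ with $\psi_1$ the radial magnetic Laplacian ground state (and symmetrically for $\beta<0$).
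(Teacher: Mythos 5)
Your proposal matches the paper's proof in all essentials: you decouple the Pauli quadratic form into the two scalar magnetic forms plus nonnegative mass terms (this is the paper's Lemma~\ref{le:raycalc}, obtained by integration by parts using the Dirichlet condition), transplant Dirichlet--Pauli eigenspinors componentwise via the constant-Jacobian map $T$, average over rotations, and absorb the rotation-invariant mass terms using $G\ge1$ --- precisely the step the paper records as ``$\beta/G\le\beta$.'' The one small divergence is the equality case, which the paper dispatches more directly by noting $\lambda_1^P+|\beta|/A=\lambda_1$ and invoking the equality statement of \autoref{Deigen} verbatim, rather than re-running the transplantation argument on the spinor ground state $(u_1,0)^T$; also, the mechanism is Hardy--Littlewood--P\'olya majorization of partial sums via the Rayleigh--Poincar\'e principle, not a ``doubly-stochastic/Ky Fan'' relation as you phrase it, though this is only a terminological slip.
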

We do not have an analogous result for the Neumann eigenvalues of the Pauli operator. Indeed, the Pauli operator does not have discrete spectrum on $H^1$, because its null space is infinite dimensional; see the discussion in \autoref{Peigen_proof}.

\section{\bf The constant Jacobian transformation}
\label{sec:setup}

In \autoref{fig:transplant} we showed how to construct a mapping from a starlike domain of area $\pi$ to the unit disk: we choose the map to be linear on each ray, with the angular deformation of rays determined by requiring that the mapping should preserve the area of each infinitesimal sector.

For a general starlike domain, we simply rescale the formula from \autoref{fig:transplant}. More precisely, we determine the angular deformation $\phi(\theta)$ by integrating the initial value problem
\[
\phi^\prime(\theta) = R(\theta)^2 \frac{\pi}{A} , \qquad \phi(0)=0 .
\]
Notice $\phi$ increases by $2\pi$ as $\theta$ increases by $2\pi$, since $\int_0^{2\pi} R(\theta)^2 \, d\theta = 2A$. Then we define a transformation 
\[
T : \Omega \to \D
\]
in polar coordinates by 
\[
T(r,\theta) = \big( r/R(\theta),\phi(\theta) \big) .
\]
Obviously the transformation is linear with respect to $r$, on each ray, and one easily checks that the Jacobian is constant, with $\text{Jac}(T) \equiv \pi/A$.

\section{\bf Dirichlet eigenvalues --- proof of \autoref{Deigen}}
\label{Deigen_proof}

The Rayleigh principle characterizes eigenvalues in terms of minima over classes of trial functions, and so in order to get upper bounds, our task is to choose suitable trial functions. We will construct trial functions on $\Omega$ by transplanting eigenfunctions from the disk with the help of the area-preserving map $T$ constructed in the previous section. (Note this method does not require explicit formulas for the eigenfunctions on the disk.) Then we average with respect to all pre-rotations of the disk.  

The Rayleigh quotient associated with the Dirichlet spectrum \eqref{eq:specD} of the magnetic Laplacian is
\[
\Ray[v] = \frac{\int_\Omega \big| (i\nabla + \frac{\beta}{2A}(-x_2,x_1))v \big|^2 \, dx}{\int_\Omega |v|^2 \, dx} \qquad \text{for\ } v \in H^1_0(\Omega;\C) . 
\]
Expressing the numerator in polar coordinates (writing $\mathbf{e}_r$ and $\mathbf{e}_\theta$ for the radial and angular unit vectors) gives that 
\begin{align}
\Ray[v] 
& = \frac{\int_\Omega \big| i v_r \mathbf{e}_r + i r^{-1} v_\theta \mathbf{e}_{\theta} + \frac{\beta}{2A} v r \mathbf{e}_\theta \big|^2 \, dx}{\int_\Omega |v|^2 \, dx} \notag \\
& = \frac{\int_\Omega \big\{ |v_r|^2 + |i r^{-1} v_\theta + \frac{\beta}{2A}rv|^2 \big\} \, rdrd\theta}{\int_\Omega |v|^2 \, rdrd\theta} . \label{eq:Rayleighpolar}
\end{align}

The Rayleigh--Poincar\'{e} Variational Principle \cite[p.~98]{B80} characterizes the sum of the first $n$ Dirichlet eigenvalues as:
\begin{align*}
\lambda_1 + \dots + \lambda_n & = \min \big\{ \Ray[v_1] + \dots + \Ray[v_n] : \\
& \qquad v_1, \dots,v_n \in H^1_0(\Omega;\C)\text{\ are pairwise orthogonal in $L^2(\Omega;\C)$} \big\} .
\end{align*}
To apply this principle, we let $u_1,u_2,u_3,\ldots$ be orthonormal eigenfunctions on the unit disk $\D$ corresponding to the  eigenvalues $\lambda_1(\D),\lambda_2(\D),\lambda_3(\D),\ldots$. Let $\eta \in \R$ and use $U$ to denote rotation of the plane by angle $\eta$. Then define trial functions on $\Omega$ by 
\[
v_j = u_j \circ U^{-1} \circ T
\]
where the transformation $T : \Omega \to \D$ was defined in \autoref{sec:setup}. Thus in polar coordinates we have
\begin{equation} \label{eq:uv}
v_j(r,\theta) = u_j \big( r/R(\theta) , \phi(\theta) - \eta \big) .
\end{equation}

One can show that $v_j \in H^1 \cap C(\Omega)$, by using that $R(\theta)$ is Lipschitz and $\phi(\theta)$ is continuously differentiable. Further, $v_j=0$ on the boundary of $\Omega$ because $u_j=0$ on the boundary of the disk. Thus $v_j \in H^1_0(\Omega)$.

The functions $v_j$ are pairwise orthogonal, since
\begin{align}
\int_\Omega v_j \overline{v_k} \, dx 
& = \text{Jac} (T^{-1}) \int_\D u_j \overline{u_k} \, dx \label{eq:pairorthog} \\
& = 0 \notag
\end{align}
whenever $j \neq k$, using here that $u_j$ and $u_k$ are orthogonal and $T^{-1}$ has constant Jacobian. Thus by the Rayleigh--Poincar\'{e} principle, we have
\begin{equation} \label{eq:rayest}
\sum_{j=1}^n \lambda_j(\Omega) \leq \sum_{j=1}^n \frac{\int_\Omega \big| (i\nabla + \frac{\beta}{2A}(-x_2,x_1))v_j \big|^2 \, dx}{\int_\Omega |v_j|^2 \, dx} .
\end{equation}

The denominator of this Rayleigh quotient is $\int_\Omega |v_j|^2 \, dx = \text{Jac}(T^{-1}) = A/\pi$ by \eqref{eq:pairorthog} with $j=k$, since the eigenfunctions are normalized with $\int_\D |u_j|^2 \, dx=1$.

To evaluate the numerator of the Rayleigh quotient, we develop some lemmas. Write $u=u_j$ and $v=v_j$, to simplify notation in what follows, and express $u$ and $v$ in polar coordinates as $u(s,\phi)$ and $v(r,\theta)$, respectively. These functions are related by \eqref{eq:uv}. 
\begin{lemma} \label{le:radial}
\begin{align*}
|v_r(r,\theta)|^2 R(\theta)^2 = | u_s \big( r/R(\theta), \phi(\theta) - \eta \big)|^2 .
\end{align*}
\end{lemma}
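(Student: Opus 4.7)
The plan is to verify the identity by a direct application of the chain rule to the formula \eqref{eq:uv}, which expresses $v$ as a composition of $u$ with the transformation $T$ composed with the rotation $U^{-1}$, read in polar coordinates as $v(r,\theta) = u\bigl(r/R(\theta),\phi(\theta)-\eta\bigr)$.

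First I would differentiate $v(r,\theta)=u\bigl(r/R(\theta),\phi(\theta)-\eta\bigr)$ with respect to $r$, treating $\theta$ as fixed so that the second slot of $u$ does not contribute. The radial argument $s=r/R(\theta)$ depends on $r$ through the factor $1/R(\theta)$, while the angular argument $\phi(\theta)-\eta$ is independent of $r$. Thus the chain rule gives
\[
v_r(r,\theta) = u_s\bigl(r/R(\theta),\phi(\theta)-\eta\bigr)\cdot\frac{1}{R(\theta)}.
\]
Taking the modulus squared and multiplying by $R(\theta)^2$ cancels the $1/R(\theta)^2$ factor and yields exactly $\bigl|u_s\bigl(r/R(\theta),\phi(\theta)-\eta\bigr)\bigr|^2$, which is the claim.

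The only regularity point worth noting is that the chain rule is valid here because $R$ is Lipschitz (so $1/R(\theta)$ is a well-defined, bounded, measurable factor), and the radial differentiation is performed for each fixed $\theta$; there is no obstacle since the $\theta$-dependence of $R$ plays no role when taking $\partial_r$. There is no substantive difficulty in this lemma — it is a bookkeeping step whose purpose is to isolate the radial part of the kinetic energy in \eqref{eq:Rayleighpolar} so that, when we later integrate with the weight $r\,dr\,d\theta$ and change variables via $T$, the factor $R(\theta)^2$ combines cleanly with the area-preservation condition $R(\theta)^2\,d\theta=d\phi$ and yields a pure disk integral of $|u_s|^2$. The harder work will come in the companion lemma that handles the angular derivative together with the magnetic term, where the phase $\phi(\theta)-\eta$ and the rotation average over $\eta$ actually enter.
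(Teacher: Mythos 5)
Your proof is correct and matches the paper's, which is simply ``differentiate \eqref{eq:uv} with respect to $r$, and square the result.'' You spell out the chain rule and the cancellation explicitly, but the approach is identical.
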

\begin{proof}
Simply differentiate \eqref{eq:uv} with respect to $r$, and square the result. 
\end{proof}
\begin{lemma} \label{le:angular}
\begin{align*}
& \left| i r^{-1} v_\theta + \frac{\beta}{2A}rv \right|^2 R(\theta)^2 \\
& = |u_s(r/R(\theta),\phi(\theta)-\eta)|^2 \, (\log R)^\prime(\theta)^2 
\ + \ 2 \Real \, \overline{u_s(r/R(\theta),\phi(\theta)-\eta)} \quad \times \\
& \qquad \Big( - \frac{R(\theta)}{r} u_\phi(r/R(\theta),\phi(\theta)-\eta) + \frac{i \beta}{2\pi} \frac{r}{R(\theta)} u(r/R(\theta),\phi(\theta)-\eta) \Big) \frac{\pi}{A} R(\theta) R^\prime(\theta) \\
& + \left| i \frac{R(\theta)}{r} u_\phi(r/R(\theta),\phi(\theta)-\eta) + \frac{\beta}{2\pi} \frac{r}{R(\theta)} u(r/R(\theta),\phi(\theta)-\eta) \right|^2 \, \frac{\pi^2 R(\theta)^4}{A^2} .
\end{align*}
\end{lemma}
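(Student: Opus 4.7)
The plan is a direct chain-rule computation, with the main job being to package the result in the exact form stated.

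First I would differentiate the defining relation \eqref{eq:uv}, $v(r,\theta) = u\bigl(r/R(\theta),\phi(\theta)-\eta\bigr)$, with respect to $\theta$. Writing $s = r/R(\theta)$ and using $\partial s/\partial\theta = -s\,(\log R)^\prime(\theta)$ together with $\phi^\prime(\theta)=\pi R(\theta)^2/A$ from the constant-Jacobian setup of \autoref{sec:setup}, the chain rule yields
\[
v_\theta = -\,u_s\,\frac{r}{R(\theta)}\,(\log R)^\prime(\theta) + u_\phi\,\frac{\pi R(\theta)^2}{A} ,
\]
with $u_s,u_\phi$ evaluated at $(r/R(\theta),\phi(\theta)-\eta)$.

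Next I would divide by $r$, add the magnetic term $\tfrac{\beta}{2A} r v$, and multiply through by $R(\theta)$ so that the $r$'s and $R(\theta)$'s line up cleanly with the expression in the lemma. Collecting terms gives
\[
R(\theta)\Bigl(i r^{-1} v_\theta + \tfrac{\beta}{2A} r v\Bigr) \;=\; -\,i\,u_s\,(\log R)^\prime(\theta) \;+\; \frac{\pi R(\theta)^2}{A}\,X ,
\]
where
\[
X \;=\; i\,\frac{R(\theta)}{r}\,u_\phi + \frac{\beta}{2\pi}\,\frac{r}{R(\theta)}\,u ;
\]
the identity $\tfrac{\pi R(\theta)^2}{A}\cdot \tfrac{\beta}{2\pi}\tfrac{r}{R(\theta)}=\tfrac{\beta r R(\theta)}{2A}$ is what makes the magnetic contribution fit into the second group.

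Finally I would apply $|a+b|^2 = |a|^2 + 2\,\Real(\bar a b) + |b|^2$ with $a = -i\,u_s\,(\log R)^\prime(\theta)$ and $b = \tfrac{\pi R(\theta)^2}{A}X$. The $|a|^2$ term produces $|u_s|^2(\log R)^\prime(\theta)^2$; the $|b|^2$ term produces the last line of the claimed identity; and the cross term uses $i\cdot i = -1$ to turn $i\,X$ into $-\tfrac{R(\theta)}{r}u_\phi + \tfrac{i\beta}{2\pi}\tfrac{r}{R(\theta)}u$, with prefactor $\tfrac{\pi R(\theta)^2}{A}\cdot\tfrac{R^\prime(\theta)}{R(\theta)} = \tfrac{\pi R(\theta) R^\prime(\theta)}{A}$, exactly matching the middle line.

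There is no real obstacle here; the only thing to watch is bookkeeping—keeping track of which powers of $R(\theta)$ and $r$ sit where, and remembering that the arguments of $u, u_s, u_\phi$ are $(r/R(\theta),\phi(\theta)-\eta)$ throughout. Once the grouping in the displayed identity above is recognized, expanding the squared modulus reproduces the three-line expression in the statement verbatim.
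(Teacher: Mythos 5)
Your computation is correct and matches the paper's own argument: both differentiate $v$ via the chain rule using $\phi'=\pi R^2/A$, then expand the squared modulus after isolating the $(\log R)'$ term from the rest. The only cosmetic difference is that you package the $u_\phi$ and $u$ contributions into a single group $X$ from the outset, whereas the paper writes the sum as $|a+b+c|^2$ and immediately expands it as $|a|^2+2\Real\bar a(b+c)+|b+c|^2$, which is the same grouping.
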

\begin{proof}
Differentiating \eqref{eq:uv} with respect to $\theta$ gives that
\[
v_\theta(r,\theta) = - r u_s \big( r/R(\theta), \phi(\theta) - \eta \big) R^\prime(\theta)/R(\theta)^2 + u_\phi \big( r/R(\theta) , \phi(\theta) - \eta \big) \phi^\prime(\theta) .
\]
Substituting this formula into the left side of the lemma yields an expression of the form $|a+b+c|^2$, which we expand as $|a|^2 + 2 \Real \overline{a}(b+c) + |b+c|^2$, hence obtaining the right side of the lemma. In the final simplification we use also that $\phi^\prime(\theta) = R(\theta)^2 \pi/A$. 
\end{proof}
\begin{lemma} \label{le:polar}
The numerator of the Rayleigh quotient for $v$  is
\[
\int_\Omega \left| \left(i\nabla + \frac{\beta}{2A}(-x_2,x_1)\right)v \right|^2 \, dx = Q_1 + Q_2 + Q_3
\]
where 
\begin{align*}
Q_1 & = \int_0^{2\pi} \int_0^1 \big| u_s( s,\phi(\theta) - \eta) \big|^2 \, s ds \, \big[ 1 + (\log R)^\prime(\theta)^2 \big]  \, d\theta , \\
Q_2 & = 2 \Real \int_0^{2\pi} \int_0^1 \overline{u_s(s,\phi(\theta)-\eta)} \quad \times \\
& \hspace*{3cm}  \Big( - s^{-1} u_\phi(s,\phi(\theta)-\eta) + \frac{i \beta}{2\pi} s u(s,\phi(\theta)-\eta) \Big) \, s ds \, \frac{\pi}{A} R(\theta) R^\prime(\theta) \, d\theta , \\
Q_3 & = \int_0^{2\pi} \int_0^1 \left| i s^{-1} u_\phi(s,\phi(\theta)-\eta) + \frac{\beta}{2\pi} s u(s,\phi(\theta)-\eta) \right|^2 \, s ds \frac{\pi^2 R(\theta)^4}{A^2} \, d\theta .
\end{align*}
\end{lemma}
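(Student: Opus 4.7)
The plan is to combine the two previous lemmas with the polar form \eqref{eq:Rayleighpolar} of the Rayleigh quotient and then execute the substitution $s = r/R(\theta)$. Specifically, start from
\[
\int_\Omega \left| \left(i\nabla + \tfrac{\beta}{2A}(-x_2,x_1)\right)v \right|^2 dx
= \int_0^{2\pi}\!\!\int_0^{R(\theta)} \Big\{ |v_r|^2 + \big| i r^{-1} v_\theta + \tfrac{\beta}{2A} r v \big|^2 \Big\} \, r\, dr\, d\theta,
\]
then multiply and divide the integrand by $R(\theta)^2$ so that Lemma~\ref{le:radial} and Lemma~\ref{le:angular} become directly applicable.

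The main observation is that the substitution $s = r/R(\theta)$ (with $\theta$ held fixed) gives $r \, dr = s\, R(\theta)^2 \, ds$ and sends $[0,R(\theta)]$ to $[0,1]$, so the factor $R(\theta)^2$ produced by multiplying inside the brackets is exactly canceled by the Jacobian of the change of variables. After this substitution the arguments $(r/R(\theta), \phi(\theta)-\eta)$ appearing throughout Lemmas~\ref{le:radial}--\ref{le:angular} become simply $(s, \phi(\theta)-\eta)$, and every occurrence of $R(\theta)/r$ becomes $1/s$ while $r/R(\theta)$ becomes $s$. Thus the integrand in the $(s,\theta)$-variables is precisely the sum of the integrands of $Q_1$, $Q_2$, and $Q_3$.

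More concretely, the term $|v_r|^2 R(\theta)^2 = |u_s|^2$ from Lemma~\ref{le:radial} together with the first summand $|u_s|^2 (\log R)'(\theta)^2$ on the right-hand side of Lemma~\ref{le:angular} combines into the bracketed factor $1 + (\log R)'(\theta)^2$ of $Q_1$. The middle cross term of Lemma~\ref{le:angular}, under the substitution, produces exactly $Q_2$, with the factor $\frac{\pi}{A} R(\theta) R'(\theta)$ surviving unchanged as an angular weight. Finally, the third summand of Lemma~\ref{le:angular} becomes $Q_3$, with the weight $\pi^2 R(\theta)^4/A^2$ surviving as the angular factor. Adding the three contributions and integrating in $\theta$ gives the claimed decomposition.

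There is no genuine obstacle here; the proof is purely a bookkeeping computation. The only point requiring mild care is ensuring that all powers of $R(\theta)$ are tracked correctly: the $R(\theta)^2$ inserted to apply the lemmas must match exactly the $R(\theta)^2$ produced by $r\, dr = s R(\theta)^2\, ds$, and every instance of $r/R(\theta)$ and $R(\theta)/r$ inside the integrand must be replaced consistently by $s$ and $s^{-1}$ respectively. Once this is done, the three resulting pieces are exactly $Q_1$, $Q_2$, and $Q_3$.
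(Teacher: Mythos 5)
Your proposal is correct and follows essentially the same approach as the paper's proof: start from the polar form \eqref{eq:Rayleighpolar} of the numerator, substitute Lemmas~\ref{le:radial} and~\ref{le:angular} (after multiplying and dividing by $R(\theta)^2$), and perform the radial change of variable $r = sR(\theta)$ so that $r\,dr = sR(\theta)^2\,ds$ cancels the inserted $R(\theta)^2$. The bookkeeping of how each term maps onto $Q_1$, $Q_2$, $Q_3$ matches the paper's intended (and very brief) argument.
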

\begin{proof}
Start with the numerator in polar coordinates as in \eqref{eq:Rayleighpolar}, then substitute using \autoref{le:radial} and \autoref{le:angular}, and make the radial change of variable $r=sR(\theta)$, so that $0<s<1$.  
\end{proof}
\begin{lemma} \label{le:polaraverage}
The averages of $Q_1,Q_2,Q_3$ with respect to $\eta$ are:
\begin{align*}
\frac{1}{2\pi} \int_0^{2\pi} Q_1 \, d\eta & = G_0(\Omega) \int_\D |u_s|^2 \, dx , \\
\frac{1}{2\pi} \int_0^{2\pi} Q_2 \, d\eta & = 0 , \\
\frac{1}{2\pi} \int_0^{2\pi} Q_3 \, d\eta & = G_1(\Omega) \int_\D \left| i s^{-1} u_\phi + \frac{\beta}{2\pi} su \right|^2 \, dx .
\end{align*}
\end{lemma}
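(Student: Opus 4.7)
The plan is to compute each of the three averages by Fubini's theorem—switching the order of integration to do the $\eta$-average first—and then exploit $2\pi$-periodicity of $u$ in its angular variable to see that the inner average is independent of $\theta$, so the remaining $\theta$-integral involves only the geometric weights $1+(\log R)'(\theta)^2$, $R(\theta) R'(\theta)$, or $R(\theta)^4$.

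Concretely, I would fix any of the three quantities inside $Q_1$, $Q_2$, $Q_3$, write it schematically as $F(s,\phi(\theta)-\eta)$ where $F$ is a function on the disk built from $u,u_s,u_\phi$, and observe that the change of variables $\alpha = \phi(\theta)-\eta$ (with the $2\pi$-periodicity of $F(s,\cdot)$) gives
\begin{equation*}
\frac{1}{2\pi} \int_0^{2\pi} F\big(s,\phi(\theta)-\eta\big) \, d\eta = \frac{1}{2\pi} \int_0^{2\pi} F(s,\alpha) \, d\alpha,
\end{equation*}
which is independent of $\theta$. Combining this with the polar expression $\int_\D \,\cdot\, dx = \int_0^{2\pi}\int_0^1 \,\cdot\, s\, ds\, d\alpha$, the $s$-part of the integration reassembles into an integral over $\D$ and a factor of $1/(2\pi)$ is absorbed into the $\theta$-average of the geometric weight.

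For $Q_1$ this produces the prefactor $\frac{1}{2\pi}\int_0^{2\pi}[1+(\log R)'(\theta)^2]\, d\theta = G_0(\Omega)$ multiplying $\int_\D |u_s|^2\,dx$. For $Q_3$ the prefactor is $\frac{1}{2\pi}\int_0^{2\pi} \pi^2 R(\theta)^4/A^2\, d\theta$, which equals $G_1(\Omega)$ after using $\int_0^{2\pi} R(\theta)^2\, d\theta = 2A$ in the denominator of the definition of $G_1$. For $Q_2$, the inner average is a complex constant $C$ depending only on $u$, multiplied by the $\theta$-integral of $R(\theta)R'(\theta) = \tfrac12 (R^2)'(\theta)$; this vanishes by $2\pi$-periodicity of $R^2$, so $\frac{1}{2\pi}\int_0^{2\pi} Q_2\, d\eta = 0$.

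There is no real obstacle here: the proof is essentially Fubini plus a periodicity-driven change of variables, and the only thing to verify with care is the bookkeeping of the factors $\pi/A$ and $2\pi$ so that the prefactors match the stated $G_0$ and $G_1$ exactly. The one conceptual point worth highlighting is the vanishing of the cross term $Q_2$, which is precisely what makes the rotation-averaging trick useful: it eliminates the interference between the radial and angular components of the Rayleigh quotient, leaving a clean split into the $G_0$-weighted radial energy and the $G_1$-weighted magnetic angular energy.
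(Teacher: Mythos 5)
Your proposal is correct and follows essentially the same route as the paper: interchange the order of integration, substitute $\alpha=\phi(\theta)-\eta$ and use $2\pi$-periodicity in the angular variable to decouple the $\eta$- and $\theta$-integrals, and then recognize $G_0$, $G_1$, and $\int_0^{2\pi}RR'\,d\theta = 0$ in the resulting $\theta$-integrals. The only cosmetic quibble is that the inner $\eta$-average in $Q_2$ is a real number (because of the overall $2\Real$), not a complex constant, but this does not affect the conclusion since the $\theta$-integral of $RR'=\tfrac12(R^2)'$ vanishes regardless.
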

\begin{proof}
For $Q_1$, we integrate the definition in \autoref{le:polar} with respect to $\eta$ and interchange the order of integration. Then making the substitution $\eta \mapsto \phi(\theta)-\eta$ allows us to separate the $\eta$ and $\theta$ integrals, which completes the proof when we recall the definition of $G_0$ from the Introduction. The argument is analogous for $Q_3$.

With $Q_2$ we proceed similarly, and then observe that $\int_0^{2\pi} R(\theta) R^\prime(\theta) \, d\theta = \frac{1}{2} R(\theta)^2 \big|^{2\pi}_0 = 0$ by periodicity. 
\end{proof}
\noindent [\emph{Aside, not needed in the rest of the paper:} Our proof that the cross-term $Q_2$ vanishes after averaging with respect to $\eta$ seems like a trick since it relies on the quantity $RR^\prime$ being a derivative. To avoid using this fact, one may include a reflection as well as rotations when constructing trial functions, as follows. Write $\Pi$ for reflection in the horizontal axis, and consider the additional trial functions 
\[
w_j = \overline{u_j} \circ \Pi \circ U^{-1} \circ T ,
\]
which in polar coordinates can be written
\[
w_j(r,\theta)=\overline{u_j(r/R(\theta),\eta-\phi(\theta))}  .
\]
Now carry out the proof as above, except using $w_j$ instead of $v_j$. The resulting quantities $Q_1$ and $Q_3$ are the same as for $v_j$, but $Q_2$ acquires a negative sign in front. Hence by averaging the numerators of the Rayleigh quotients for $v_j$ and $w_j$ we eliminate $Q_2$ and obtain simply $Q_1+Q_3$. Then one  averages with respect to $\eta$ by the formulas in \autoref{le:polaraverage}.]

\medskip
Now we return to the proof of the theorem. The left side of \eqref{eq:rayest} is independent of the rotation angle $\eta$. Hence by averaging \eqref{eq:rayest} with respect to $\eta \in [0,2\pi]$ we find
\[
\sum_{j=1}^n \lambda_j(\Omega) \leq \sum_{j=1}^n \frac{\frac{1}{2\pi} \int_0^{2\pi} (Q_1+Q_2+Q_3) \, d\eta}{A/\pi} ,
\]
where we must remember that ``$u$'' means $u_j$, in the quantities $Q_1,Q_2,Q_3$. Thus \autoref{le:polaraverage} shows that 
\begin{align*}
\sum_{j=1}^n \lambda_j(\Omega) A & \leq \pi \sum_{j=1}^n \left[ G_0(\Omega) \int_\D \left| \frac{\partial u_j}{\partial s} \right|^2 \, dx + G_1(\Omega) \int_\D \left| i s^{-1} \frac{\partial u_j}{\partial \phi} + \frac{\beta}{2\pi} su_j \right|^2 \, dx \right] \\
& = \pi \sum_{j=1}^n \Big[ (1-\alpha_j) G_0(\Omega) + \alpha_j G_1(\Omega) \Big] \int_\D \left| (i\nabla + \frac{\beta}{2\pi}(-x_2,x_1))u_j \right|^2 \, dx
\end{align*}
where 
\[
\alpha_j = \frac{\int_\D \big| i s^{-1} \frac{\partial u_j}{\partial \phi} + \frac{\beta}{2\pi} su_j \big|^2 \, dx}{\int_\D \big| (i\nabla + \frac{\beta}{2\pi}(-x_2,x_1))u_j \big|^2 \, dx} , \qquad j=1,2,\ldots,n .
\]
The coefficient $\alpha_j \in [0,1]$ measures the ``angular component'' of the magnetic energy of the $j$th mode; see \eqref{eq:Rayleighpolar}.

We may estimate $G_0$ and $G_1$ from above with their maximum, $G$, so that
\[
\sum_{j=1}^n \lambda_j(\Omega) A(\Omega)/G(\Omega) \leq \pi \sum_{j=1}^n \int_\D \left| \left(i\nabla + \frac{\beta}{2\pi}(-x_2,x_1)\right)u_j \right|^2 \, dx = \pi \sum_{j=1}^n \lambda_j(\D) .
\]
Since $A(\D)=\pi$ and $G(\D)=1$, the theorem is proved for the case that $\Phi(a) \equiv a$ is the identity function. Note we have proved the result with a unit disk on the right side, but any centered disk will do, by scale invariance. 

Now Hardy--Littlewood--P\'{o}lya majorization extends the result to all concave increasing $\Phi$. (For references on majorization, see \cite[Appendix~A]{LS13}.)

\smallskip
\noindent \emph{Remark.} Before completing the proof, we ask: could the factor $G_1$ be eliminated from the theorem when we study the Dirichlet ground state energy $\lambda_1$? The answer is No in the magnetic case, because even though the ground state of the disk is purely radial (a fact which seems to be non-obvious \cite{S14}), one has $\alpha_1>0$ by the definition above, when $\beta$ is nonzero. 

\subsection*{Particular choices of $\Phi$.} The function $\Phi(a)=a^s$ is concave and increasing, when $0 < s \leq 1$, and this choice of $\Phi$ gives maximality of $(\lambda_1^s + \cdots + \lambda_n^s)^{1/s} \, A/G$ for the centered disk. Choosing $\Phi(a)=\log a$ shows maximality of the centered disk for the functional
\[
\sum_{j=1}^n \log (\lambda_j A/G) = n \log \Big( \sqrt[n]{\lambda_1 \cdots \lambda_n} \, A/G \Big) .
\]
The function $\Phi(a)= - a^s$ is concave increasing, when $s<0$, and so we obtain minimality of the centered disk for $\sum_{j=1}^n (\lambda_j A/G)^s$. Lastly, for $t>0$ we consider $\Phi(a)=-e^{-at}$ to prove minimality of $\sum_{j=1}^n \exp(-\lambda_j At/G)$ for the centered disk.

\subsection*{Dirichlet equality statement.} Assume equality holds for the first eigenvalue, that is,
\[
\lambda_1 A/G \big|_\Omega = \lambda_1 A /G \big|_\D .
\]
By enforcing equality in our proof above, with $n=1$, we see that the trial function $v_1$ on $\Omega$ must attain equality in the Rayleigh characterization of $\lambda_1(\Omega)$, and hence must be a first eigenfunction for $\Omega$. In particular this holds when $\eta=0$ (no rotation), so that the function $v(x)=u_1(T(x))$ satisfies
\[
(i\nabla+F)^2 v = \lambda_1(\Omega) v 
\]
classically, where $F(x)=\frac{\beta}{2A}(-x_2,x_1)$. That is, 
\begin{equation} \label{eq:expanded}
-\Delta v+2iF\cdot \nabla v+|F|^2 v = \lambda_1(\Omega) v . 
\end{equation}

The ground state of the disk is radial for the magnetic Laplacian with Dirichlet boundary conditions \cite{S14}, so that $u_1(x)=J(|x|)$ for some real-valued function $J$. (In the case of the Laplacian this $J$ is simply the zeroth Bessel function, whereas for the magnetic Laplacian it is a Kummer function, as discussed in the next section.) Observe that $J^\prime(r_0) \neq 0$ for some $r_0 \in(0,1)$ because $J$ cannot be identically zero and $J(1)=0$ by the Dirichlet boundary condition. 

Taking imaginary parts in \eqref{eq:expanded} shows that $F\cdot \nabla v = 0$, so that $\beta v_\theta = 0$. Suppose $\beta \neq 0$. Since $v(x) = u_1(T(x)) = J(r/R(\theta))$, we have
\begin{align*}
  0=v_\theta=-J^\prime\big(r/R(\theta)\big)rR(\theta)^{-2} R^\prime(\theta) .
\end{align*}
Choosing $r=r_0 R(\theta)$, we deduce that $R^\prime(\theta)=0$ for almost every $\theta$. Hence the radius function is constant, which means $\Omega$ is a centered disk.

For the equality case when $\beta = 0$, see our earlier work \cite[Theorem 3.1]{LS13}. That earlier work assumes $R(\theta)$ is $C^2$-smooth, but in fact that smoothness follows from inverting the formula $v(x) = J(r/R(\theta))$ to solve for $R$, using smoothness of the first eigenfunction $v$ and the radial function $J$.

\section{\bf Perturbation analysis}
\label{sec:formal}

\subsection*{Proof of \autoref{perturb2}}
Start by applying \autoref{Deigen} to the first eigenvalue ($n=1$) and then substitute $R=1+\e P$ into the definitions of $G_0$ and $G_1$. One obtains the following expressions:
\begin{align*}
G_0(\Omega_\e) = 1+\int_0^{2\pi} \frac{\e^2 P^\prime(\theta)^2}{\big( 1+\e P(\theta) \big)^2} \, \frac{d\theta}{2\pi}  
& = 1 + \e^2 \int_0^{2\pi} P^\prime(\theta)^2 \, \frac{d\theta}{2\pi} + O(\e^3) \\
& = 1 + \e^2 \sum_{n \neq 0} n^2 |p_n|^2 + O(\e^3)
\end{align*}
and
\begin{align*}
G_1(\Omega_\e) = \frac{\int_0^{2\pi} \big( 1 + \e P(\theta)\big)^{\! 4} \, d\theta/2\pi}{\big[ \int_0^{2\pi} \big( 1 + \e P(\theta)\big)^{\! 2} \, d\theta/2\pi \big]^{\! 2}} 
& = 1 + 4 \e^2 \int_0^{2\pi} \big( P(\theta) - p_0 \big)^2 \, \frac{d\theta}{2\pi} + O(\e^3) \\
& = 1 + 4 \e^2 \sum_{n \neq 0} |p_n|^2 + O(\e^3) .
\end{align*}
The upper bound in the corollary now follows once we use the symmetry of the coefficients ($p_{-n}=\overline{p_n}$).

\subsection*{Proof of \autoref{th:perturb}}
The Kummer function $M(a,b,z)$ satisfies the differential equation
\begin{align}
  zM^{\prime \prime}+(b-z)M^\prime-aM = 0 \label{eq:kummerDE}
\end{align}
with initial condition $M(a,b,0)=1$. (Recall that primes indicate derivatives with respect to $z$.) Define
\[
  f_n (r,\lambda) = (r^2/\pi)^{|n|/2} e^{-\beta r^2/4\pi} M \left( (1+|n|-n - \lambda/\beta)/2 , |n|+1 , \frac{\beta r^2}{2\pi}\right)
\]
for $n \in \Z$ and $r,\lambda > 0$. We rescale $f_n$ by area,
then modulate by $e^{in \theta}$, and form a series combination as follows:
\[
u(r,\theta) = f_0( r\sqrt{\pi/A},\lambda A) +\e\sum_{n\ne 0} c_n f_{n}( r\sqrt{\pi/A},\lambda A) e^{in\theta} ,
\]
with coefficients $c_n$ to be chosen below. One finds that $u$ satisfies formally the eigenvalue equation $(i\nabla+F)^2 u=\lambda u$ in the plane, by expressing the eigenvalue equation in polar coordinates as 
\[
- (u_{rr}+r^{-1}u_r+r^{-2}u_{\theta \theta}) + i \frac{\beta}{A} u_\theta + \frac{\beta^2r^2}{4 A^2} u=\lambda u
\]
and then using the Kummer differential equation. 

The goal of the perturbation analysis is to choose the parameter $\lambda$ and coefficients $c_n$ so that $u=0$ on the boundary of the perturbed domain (at least to second order in $\e$). Then we have a Dirichlet eigenfunction, and it should be close to the ground state of the perturbed domain provided $\e$ is small enough; we do not seek to make these claims rigorous, since we are carrying out a formal analysis only. 

Recall $\Omega_\e$ is a nearly circular domain defined in polar coordinates by $r \le R(\theta)= 1+\e P(\theta)$, where $P$ is expressed in a Fourier series as before. We assume the constant term vanishes: $p_0=0$. Let
\[
\gamma =\sum_{n\ne0}|p_n|^2=\frac{1}{2\pi}\int_0^{2\pi} P(\theta)^2 \, d\theta
\]
so that the domain has area
\[
  A=\frac{1}{2} \int_0^{2\pi} R(\theta)^2 \, d\theta = \pi(1+\e^2\gamma)
\]
and hence
\begin{align*}
  \sqrt{\frac{\pi}{A}}&=1-\e^2\gamma/2+O(\e^4),\\
  R(\theta) \sqrt{\frac{\pi}{A}} &=1+\e P(\theta)-\e^2\gamma/2+O(\e^3) .
\end{align*}

Denote by $\lambda_0$ the lowest magnetic eigenvalue of the unit disk, with Dirichlet boundary condition. The corresponding ground state on the disk is a radial function $f_0(r,\lambda_0 \pi)$, satisfying
\begin{align} \label{eigeneq}
- (u_{rr}+r^{-1}u_r) + \frac{\beta^2r^2}{4 \pi^2} u=\lambda_0 u
\end{align}
with the boundary condition 
\begin{align}
  f_0(1,\lambda_0\pi)=0 , \qquad \text{or} \qquad M(a_0 , 1 , z) = 0 \label{eq:gsmag}
\end{align}
where $z=\beta/2\pi$; these claims about the ground state are justified in \cite{S14}. 

To carry out a perturbation analysis, we assume that the lowest eigenvalue of the perturbed domain varies with $\e$ according to
\[
  \lambda A=\lambda_0\pi+\rho \e +\tau \e^2 + O(\e^3) ,
\]
for some coefficients $\rho$ and $\tau$ to be determined. For the Dirichlet boundary condition on $\Omega_\e$ we require
\[
0 =  u(R(\theta),\theta) , \qquad \theta \in [0,2\pi] ,
\]
so that we want
\begin{align*}
 0 &=
  f_0\big( 1+\e P(\theta)-\e^2\gamma/2+O(\e^3),\lambda_0\pi+\rho \e+\tau \e^2 + O(\e^3) \big)+\\
  &
  \qquad+\e\sum_{n\ne0} c_n f_{n}\big( 1+\e P(\theta)-\e^2 \gamma/2+O(\e^3),\lambda_0\pi+\rho \e+\tau \e^2 + O(\e^3) \big) e^{in\theta} .
\end{align*}
Denote the partial derivatives of $f_n$ using superscripts, so that $f_n^{1,0}=\partial f_n/\partial r$. Then 
the Taylor expansion of the above boundary condition says to second order in $\e$ that
\begin{align*}
0 & =  f_0(1,\lambda_0\pi)
  +\e\sum_{n\ne0} c_n f_{n}(1,\lambda_0\pi) e^{in\theta}
  \\
  &\qquad+\left(\e P(\theta)-\frac{\e^2}{2}\gamma\right)f_0^{1,0}(1,\lambda_0\pi)+(\e \rho+\tau \e^2)f_0^{0,1}(1,\lambda_0\pi)
  \\
  &\qquad+\e\sum_{n\ne0} c_n \left(\e P(\theta)f_{n}^{1,0}(1,\lambda_0\pi)+\e \rho f_{n}^{0,1}(1,\lambda_0\pi)\right) e^{in\theta}
  \\
  &\qquad +
  \frac{1}{2}\e^2 P^2(\theta)f_0^{2,0}(1,\lambda_0 \pi)+\e^2\rho P(\theta)f_0^{1,1}(1,\lambda_0\pi)+
\frac{1}{2}\e^2\rho^2f_0^{0,2}(1,\lambda_0 \pi) + O(\e^3) .
\end{align*}
The zeroth order term vanishes by \eqref{eq:gsmag}. We want the first and the second order terms to vanish also. 

Vanishing of the first order term requires
\[
  \sum_{n\ne0} c_n f_{n}(1,\lambda_0\pi) e^{in\theta}+\sum_{n\ne0} p_n e^{in\theta}
  f_0^{1,0}(1,\lambda_0\pi)+\rho f_0^{0,1}(1,\lambda_0\pi)=0.
\]
The constant term in this equation tells us that $\rho f_0^{0,1}(1,\lambda_0\pi)=0$, and hence
\[
  \rho=0
\]
(assuming for now that $f_0^{0,1}(1,\lambda_0 \pi) \ne 0$, which we justify below). For $n\ne 0$, we get
\begin{align}\label{An}
  c_n=-p_n \frac{f_0^{1,0}(1,\lambda_0\pi)}{f_{n}(1,\lambda_0\pi)} .
\end{align}

If we average the second order term over $\theta$ and put $\rho=0$, we get
\begin{align*}
 \tau f_0^{0,1}(1,\lambda_0\pi)+\sum_{n\ne0} c_n p_{-n} f_{n}^{1,0}(1,\lambda_0\pi)
  +\frac{\gamma}{2}f_0^{2,0}(1,\lambda_0 \pi) -\frac{\gamma}2f_0^{1,0}(1,\lambda_0\pi)=0 .
\end{align*}
Note that $f_0^{2,0}(1,\lambda_0\pi)=-f_0^{1,0}(1,\lambda_0\pi)$ by evaluating the eigenvalue equation \eqref{eigeneq} at $r=1$. Hence using \eqref{An}, we may solve for the coefficient $\tau$ as
\begin{align}
  \tau 
& = \frac{f_0^{1,0}(1,\lambda_0\pi)}{f_0^{0,1}(1,\lambda_0\pi)} \sum_{n\ne0} |p_n|^2\left(1+ \frac{f_{n}^{1,0}(1,\lambda_0\pi)}{f_{n}(1,\lambda_0\pi)}\right) \notag \\
& = 2 \frac{f_0^{1,0}(1,\lambda_0\pi)}{f_0^{0,1}(1,\lambda_0\pi)} \sum_{n=1}^\infty |p_n|^2\left(1+ \frac{1}{2} \frac{f_{n}^{1,0}(1,\lambda_0\pi)}{f_{n}(1,\lambda_0\pi)} + \frac{1}{2} \frac{f_{-n}^{1,0}(1,\lambda_0\pi)}{f_{-n}(1,\lambda_0\pi)} \right) \label{pert}
\end{align}
by symmetry, since $|p_n|=|p_{-n}|$.

Let us simplify these expressions. The ratio before the infinite sum in \eqref{pert} evaluates to
\[
2 \frac{f_0^{1,0}(1,\lambda_0\pi)}{f_0^{0,1}(1,\lambda_0\pi)}\ = 2 \frac{M^\prime(a_0,1,z) \cdot \beta/\pi}{\frac{\partial M}{\partial a}(a_0,1,z) \cdot (-1/2\beta)} = c,
\]
by definition of $f_0$ and $c$ and remembering that $M(a_0,1,z)=0$ from \eqref{eq:gsmag}. We claim the denominator of $c$ is nonzero. Since $M(a_0,1,z)=0$, a parametric derivative formula due to Son \cite[Chapter 4]{S14} simplifies to tell us that
\begin{align*}
\frac{\partial M}{\partial a} (a_0,1,z) = -U(a_0,1,z) \Gamma(a_0) \int_0^z e^{-x}M(a_0,1,x)^2 \, dx
\end{align*}
where $U$ is the second standard Kummer function. Clearly the last integral is positive, and $U(a_0,1,z) \ne 0$ when $M(a_0,1,z)=0$, as explained in \cite[Chapter 4]{S14}. Thus the denominator of $c$ is nonzero.   

Next, write $q_n$ for the factor $(\cdots)$ in \eqref{pert}. By substituting the definitions of $f_n$ and $f_{-n}$ into \eqref{pert} we see that 
\begin{align}
q_n = 1+n -z + z ( \log M )^\prime(a_0,n+1,z) + z ( \log M )^\prime(a_0+n,n+1,z) , \label{eq:qndef}
\end{align}
where the derivatives are evaluated at the specific value $z = \beta/2\pi$ and we recall the definition $a_0=(1-\lambda_0 \pi/\beta)/2$. Thus $q_n$ has the form claimed in the theorem. 

We will show the first coefficient vanishes: $q_1=0$. Start with the definition
\[
q_1 = 2 -z + z ( \log M )^\prime(a_0,2,z) + z ( \log M )^\prime(a_0+1,2,z) .
\] 
First observe that $a_0<0$, since $\lambda_0>\beta/\pi$ (either by domain monotonicity and comparison with the first Landau level on the plane, or else by properties of Kummer functions \cite{S14}). 
Then note that 
\[
M(a_0+1,2,\cdot) = \frac{1}{a_0} M^\prime(a_0,1,\cdot)
\]
by the identity \cite[(13.3.15)]{NIST}. Further, 
\[
M(a_0,2,\cdot) = \frac{1}{a_0-1} \big( M^\prime(a_0,1,\cdot) - M(a_0,1,\cdot) \big)
\]
by expanding \cite[(13.3.20)]{NIST}. Substituting these last two formulas into the expression above for $q_1$ shows that
\[
q_1 = 2 -z + z \frac{M^{\prime \prime}(a_0,1,z) - M^\prime(a_0,1,z)}{M^\prime(a_0,1,z) - M(a_0,1,z)} + z \frac{M^{\prime \prime}(a_0,1,z)}{M^\prime(a_0,1,z)} .
\]
Since $M(a_0,1,z)=0$ by \eqref{eq:gsmag}, we may simplify to obtain
\[
q_1 = 2 - 2z + 2 \frac{zM^{\prime \prime}(a_0,1,z)}{M^\prime(a_0,1,z)} .
\]
The Kummer differential equation \eqref{eq:kummerDE} lets us substitute for $M^{\prime \prime}$, leading to
\[
q_1 = 2 -2z + 2 \frac{(z-1)M^\prime(a_0,1,z)}{M^\prime(a_0,1,z)} = 0
\]
as we wanted to show.

Finally we show $q_n=n+O(1)$. First, for any fixed $a,b,\zeta \in \R$ we have
\[
M(a,n+b,\zeta) \to 1 \qquad \text{and} \qquad M^\prime(a,n+b,\zeta) \to 0
\]
as $n \to \infty$, by the series definition of the Kummer function \cite[13.2.2]{NIST}. Second, 
\[
M(n+a,n+b,\zeta) \to e^\zeta \qquad \text{and} \qquad M^\prime(n+a,n+b,\zeta) \to e^\zeta
\]
as $n \to \infty$, again by using the series for the Kummer function. Hence the definition \eqref{eq:qndef} implies that $q_n=n+1+o(1)$ as $n \to \infty$.

\section{\bf Pauli eigenvalues}
\label{Peigen_proof}

\subsection*{Ellipticity of the numerator} The numerator of the Pauli-Rayleigh quotient \eqref{RayPauli} decouples as follows.
\begin{lemma}[Decoupling of numerator] \label{le:raycalc}
For $\psi= \left( \begin{smallmatrix} \psiplus \\ \psiminus \end{smallmatrix} \right)$ belonging to $H^1_0(\Omega;\C^2)$ we have
\begin{align*}
& \int_\Omega \left| \sum_{m=1}^2 \sigma_m (i\partial_m + F_m) \psi \right|^2 \, dx \\
& = \int_\Omega \Big( |(i\nabla + F) \psiplus |^2 - \frac{\beta}{A} |\psiplus|^2 \Big) dx +  \int_\Omega  \Big( |(i\nabla + F) \psiminus |^2 + \frac{\beta}{A} |\psiminus|^2 \Big) dx .
\end{align*}
\end{lemma}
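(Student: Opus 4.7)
The plan is to establish the operator identity
\[
\Big( \sum_{m=1}^2 \sigma_m (i\partial_m + F_m) \Big)^{\! 2} = (i\nabla + F)^2 \otimes I - \frac{\beta}{A} \sigma_3 ,
\]
from which the lemma follows immediately: since $\sigma_3 = \operatorname{diag}(1,-1)$ puts opposite signs on the two spinor components, pairing this operator identity against $\psi$ and integrating produces exactly the $\mp \beta/A$ contributions attached to $\psiplus$ and $\psiminus$ respectively.

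To derive the operator identity, I would abbreviate $D_m = i\partial_m + F_m$ and use the Pauli relations $\sigma_m^2 = I$ together with $\sigma_1 \sigma_2 = i\sigma_3 = -\sigma_2 \sigma_1$ to obtain
\[
\Big( \sum_m \sigma_m D_m \Big)^{\! 2} = D_1^2 + D_2^2 + i \sigma_3 [D_1, D_2] .
\]
A direct calculation gives $[D_1,D_2] = i(\partial_1 F_2 - \partial_2 F_1) = i \beta/A$, using the explicit form $F = \frac{\beta}{2A}(-x_2,x_1)$. Since $i\sigma_3 \cdot i\beta/A = -(\beta/A)\sigma_3$, this yields the target identity.

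To extract the stated integral equation from the operator identity, I would rewrite the Rayleigh numerator as $\int_\Omega \overline{\psi}^T \Sigma^2 \psi \, dx$, with $\Sigma = \sum_m \sigma_m D_m$, and then apply the scalar quadratic-form identity
\[
\int_\Omega \overline{u} \, (i\nabla+F)^2 u \, dx = \int_\Omega |(i\nabla+F)u|^2 \, dx
\]
separately to $u = \psiplus$ and $u = \psiminus$. Boundary contributions vanish because of the Dirichlet condition, and the diagonal structure of $\sigma_3$ then distributes $-\beta/A$ onto the $\psiplus$ term and $+\beta/A$ onto the $\psiminus$ term.

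The main care-point is that $\Sigma^2 \psi$ nominally requires $\psi$ to have two derivatives, while we only assume $\psi \in H^1_0(\Omega;\C^2)$. I would handle this by working at the level of quadratic forms rather than operators: each side of the target identity is a continuous quadratic form on $H^1_0(\Omega;\C^2)$, so it suffices to verify the identity for $\psi \in C^\infty_c(\Omega;\C^2)$, where all manipulations are classical, and then invoke density.
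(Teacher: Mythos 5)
Your proof is correct, and it takes a genuinely different route from the one in the paper. The paper proves the lemma by brute force: it writes out the two components of $\sum_m \sigma_m (i\partial_m + F_m)\psi$ explicitly via \eqref{eq:raycalc1}, expands the resulting squares, and then integrates the cross-terms by parts (twice for the $\partial_1\psiplus\,\overline{\partial_2\psiplus}$ piece, once for the $F$-linear pieces) to surface the $\pm\beta/A$ terms. You instead derive the operator identity $\Sigma^2 = (i\nabla+F)^2 I - \frac{\beta}{A}\sigma_3$ directly from the Clifford algebra of the Pauli matrices together with the commutator $[D_1,D_2] = i\beta/A$, and then pass to quadratic forms by density of $C_c^\infty$. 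Both arguments ultimately compute the same commutator; yours packages it more cleanly and makes the Lichnerowicz-type structure $H_P = (i\nabla+F)^2 I - \sigma\cdot B$ visible from the outset. Interestingly, the paper itself acknowledges precisely your route later in the same section, under the heading ``Splitting of the spectrum'': the authors remark that ``one obtains this same result at the level of operators'' via $H_P = (i\nabla+F)^2 I - \sigma\cdot B$ and $\sigma\cdot B = (\beta/A)\sigma_3$, but they chose not to present it as the proof of the lemma, presumably to keep the quadratic-form argument self-contained for $H^1_0$ data. Your density remark correctly addresses the regularity gap that would otherwise arise from applying $\Sigma^2$ to functions with only one weak derivative, and the Dirichlet boundary condition is exactly what kills the boundary terms in the two integrations by parts implicit in passing between $\int |\Sigma\psi|^2$, $\int \overline{\psi}^{\,T}\Sigma^2\psi$, and $\int |(i\nabla+F)\psi_\pm|^2$.
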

\begin{proof}
We have by direct calculation (using the definition of the Pauli matrices) that
\begin{align}
& \left| \sum_{m=1}^2 \sigma_m (i\partial_m + F_m) \psi \right|^2 \notag \\
& = \big| (i\partial_1 + F_1)\psiplus + i(i\partial_2 + F_2) \psiplus \big|^2 + \big| (i\partial_1 + F_1)\psiminus - i(i\partial_2 + F_2) \psiminus \big|^2 . \label{eq:raycalc1}
\end{align}
Expand the squares to obtain
\begin{align*}
\left| \sum_{m=1}^2 \sigma_m (i\partial_m + F_m) \psi \right|^2
& = |(i\nabla + F)\psiplus|^2 + 2 \Real (i\partial_1 + F_1)\psiplus \, \overline{i(i\partial_2 + F_2)\psiplus} \\
& + |(i\nabla + F)\psiminus|^2 - 2 \Real (i\partial_1 + F_1)\psiminus \, \overline{i(i\partial_2 + F_2)\psiminus} .
\end{align*}
Then integrate the first cross-term as follows. One has 
\begin{align*}
& 2 \Real \int_\Omega (i\partial_1 + F_1)\psiplus \, \overline{i(i\partial_2 + F_2)\psiplus} \, dx \\
& = 2 \Real \int_\Omega \Big( -i \partial_1 \psiplus \overline{\partial_2 \psiplus} + \partial_1 \psiplus F_2 \overline{\psiplus} - F_1 \psiplus \overline{\partial_2 \psiplus} - i F_1 F_2 |\psiplus|^2 \Big) dx .
\end{align*}
The fourth term is imaginary, and so can be discarded. The first term can be integrated by parts twice to obtain its negative, by using the Dirichlet boundary conditions, and hence the integral of the first term must equal zero. Thus we are left with the integral of the second and third terms, so that
\begin{align*}
2 \Real \int_\Omega (i\partial_1 + F_1)\psiplus \, \overline{i(i\partial_2 + F_2)\psiplus} \, dx
& = \int_\Omega \Big( F_2 \partial_1 |\psiplus|^2 - F_1 \partial_2 |\psiplus|^2 \Big) dx \\
& = \int_\Omega (- \partial_1 F_2 + \partial_2 F_1) |\psiplus|^2 \, dx \qquad \text{by parts} \\
& = - \frac{\beta}{A} \int_\Omega |\psiplus|^2 \, dx ,
\end{align*}
where once again the boundary terms have vanished in the integration by parts because $\psiplus \in H^1_0(\Omega;\C)$.  The analogous formula holds for $\psiminus$, and so the lemma follows. 
\end{proof}

\begin{lemma}[Ellipticity of the numerator] \label{le:rayelliptic}
For $\psi= \left( \begin{smallmatrix} \psiplus \\ \psiminus \end{smallmatrix} \right)$ belonging to $H^1_0(\Omega;\C^2)$,
\[
\int_\Omega \left| \sum_{m=1}^2 \sigma_m (i\partial_m + F_m) \psi \right|^2 \, dx \notag \\
\geq \frac{1}{2} \int_\Omega |\nabla \psi|^2 \, dx - C \int_\Omega |\psi|^2 \, dx 
\]
where $\nabla \psi= \left( \begin{smallmatrix} \nabla \psiplus \\ \nabla \psiminus \end{smallmatrix} \right)$ and the constant can be chosen as $C=\lVert F \rVert_{L^\infty(\Omega)}^2 + |\beta|/A$.
\end{lemma}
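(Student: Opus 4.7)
The plan is to combine the decoupling formula from \autoref{le:raycalc} with an elementary pointwise lower bound on $|(i\nabla + F)u|^2$ for a scalar function $u$, applied separately to each spinor component. Since the hard work (the integration by parts that cancels the cross terms and produces the $\pm \beta/A$ potentials) is already done by \autoref{le:raycalc}, what remains is essentially the standard ``magnetic-to-kinetic'' lower bound with careful bookkeeping of constants.

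First I would establish the scalar inequality
\[
|(i\nabla + F)u|^2 \geq \tfrac{1}{2}|\nabla u|^2 - |F|^2 |u|^2 \qquad \text{pointwise in $\Omega$,}
\]
valid for any $u \in H^1(\Omega;\C)$. This follows from the elementary vector inequality $|a+b|^2 \geq \tfrac{1}{2}|a|^2 - |b|^2$ (a rearrangement of $|a|^2 \leq 2|a+b|^2 + 2|b|^2$) with $a = i\nabla u$ and $b = Fu$, so that $|a|^2 = |\nabla u|^2$ and $|b|^2 = |F|^2|u|^2$.

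Next I would apply \autoref{le:raycalc} to split the Pauli numerator into two scalar magnetic Dirichlet forms, one for $\psiplus$ and one for $\psiminus$, plus the potential terms $\mp (\beta/A) |\psi_\pm|^2$. Applying the scalar inequality above to each of $\psiplus$ and $\psiminus$ and integrating over $\Omega$, one obtains
\[
\int_\Omega |(i\nabla + F)\psi_\pm|^2\,dx \;\geq\; \tfrac{1}{2}\int_\Omega |\nabla \psi_\pm|^2 \,dx \;-\; \lVert F\rVert_{L^\infty(\Omega)}^2 \int_\Omega |\psi_\pm|^2 \, dx.
\]
Summing the two identities and using $|\nabla \psi|^2 = |\nabla \psiplus|^2 + |\nabla \psiminus|^2$ and $|\psi|^2 = |\psiplus|^2 + |\psiminus|^2$, together with the crude bound $\bigl|\tfrac{\beta}{A}(|\psiminus|^2 - |\psiplus|^2)\bigr| \leq \tfrac{|\beta|}{A}|\psi|^2$ to absorb the potential terms, yields the claimed lower bound with $C = \lVert F \rVert_{L^\infty(\Omega)}^2 + |\beta|/A$.

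There is no real obstacle here: the delicate step — integrating by parts to exploit the Dirichlet boundary condition and thereby convert the Pauli cross term into the $\pm(\beta/A)|\psi_\pm|^2$ potentials — has already been performed in \autoref{le:raycalc}. The only thing to watch is that the rough inequality $|a+b|^2 \geq \tfrac{1}{2}|a|^2 - |b|^2$ loses a factor of $2$, which accounts for the factor $\tfrac{1}{2}$ in the statement, and that one tracks $|\beta|/A$ rather than $\beta/A$ when absorbing the sign-indefinite potential piece.
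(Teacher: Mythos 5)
Your proof is correct and is exactly the paper's argument: the paper's one-line proof says to combine \autoref{le:raycalc} with $|a+b|^2 \geq \tfrac{1}{2}|a|^2 - |b|^2$, and your write-up simply fills in the straightforward details of that combination. Nothing is missing.
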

\begin{proof}
Combine \autoref{le:raycalc} with the elementary inequality $|a+b|^2 \geq \frac{1}{2}|a|^2 - |b|^2$. 
\end{proof}

\subsection*{Proof of \autoref{Peigen}}
Assume $\beta>0$. (The proof is similar when $\beta<0$.) The Rayleigh quotient for the shifted Pauli eigenvalue $\lambda^P_j+\beta/A$ equals
\begin{equation*} 
\Ray_\text{shifted}[\psi] = \frac{\int_\Omega |(i\nabla + F) \psiplus |^2 \, dx +  \int_\Omega  \Big( |(i\nabla + F) \psiminus |^2 + \frac{2\beta}{A} |\psiminus|^2 \Big) dx}{\int_\Omega |\psi|^2 \, dx} ,
\end{equation*}
as we see by adding $\beta/A$ to the definition \eqref{RayPauli} of the unshifted Rayleigh quotient and then substituting the expression for its numerator from \autoref{le:raycalc}. 

Now one may prove the theorem by adapting straightforwardly the proof of \autoref{Deigen}, using in the course of the proof that $\beta/G \leq \beta$.  The equality statement for the first eigenvalue follows immediately from the equality statement in \autoref{Deigen}, since the lowest Pauli eigenvalue is related to the lowest eigenvalue of the magnetic Laplacian by $\lambda^P_1+|\beta|/A=\lambda_1$. 

\subsection*{Complex form of the Rayleigh quotient}
By substituting $F_1=-x_2\beta/2A$ and $F_2=x_1\beta/2A$ into \eqref{eq:raycalc1} we obtain the complex form of the Rayleigh quotient, which remains valid no matter what boundary conditions $\psi$ might satisfy:
\[
\Ray_P[\psi] = 4 \, \frac{\int_\Omega \big| (\overline{\partial} + \beta z/4A ) \psiplus \big|^2 + \big| (\partial - \beta \overline{z}/4A ) \psiminus \big|^2 \, dx}{\int_\Omega |\psi|^2 \, dx}
\]
where $\partial=\partial/\partial z$ is the complex derivative. Hence the Rayleigh quotient vanishes if and only if %
\begin{equation} \label{eq:rayz}
\psiplus = e^{-\beta |z|^2/4A} f_+(z) \qquad  \text{and} \qquad \psiminus = e^{\beta |z|^2/4A} \overline{f_-(z)} 
\end{equation}
for some holomorphic functions $f_+$ and $f_-$. We deduce that the zero modes form an infinite dimensional subspace of $H^1$, and so the numerator of the Rayleigh quotient is definitely not elliptic on $H^1$. To learn about zero modes on the whole plane, readers can consult the Aharanov--Casher theorem \cite{AC79}. 

\subsection*{Positivity of the first Pauli--Dirichlet eigenvalue}
The first Dirichlet eigenvalue is nonnegative, since the Rayleigh quotient is nonnegative. If the first eigenvalue were zero then the Rayleigh quotient of the first eigenfunction $\psi$ would equal zero, implying \eqref{eq:rayz}. The holomorphic functions $f_+$ and $f_-$  would then be forced to vanish identically, by the Dirichlet boundary condition, and so $\psi \equiv 0$, which is impossible. Hence the first eigenvalue must be positive. 

\subsection*{Splitting of the spectrum, and an alternative proof of \autoref{Peigen}}
The decoupling of the Rayleigh quotient in \autoref{le:raycalc} implies a decoupling of the eigenvalue equations into separate equations for each component of the spinor:
\begin{align}
(i\nabla + F)^2 \psiplus - (\beta/A) \psiplus & = \lambda \psiplus \label{eq:de1} \\ 
(i\nabla + F)^2 \psiminus + (\beta/A) \psiminus & = \lambda \psiminus \label{eq:de2}
\end{align}
Hence the introduction of spin into the quantum system splits the spectrum of the Dirichlet magnetic Laplacian into two copies, with one copy shifted up by $\beta/A$ and another shifted down by the same amount. More precisely, if we write $H_{\text{mag}} = (i\nabla + F)^2$ for the magnetic Laplacian two dimensions, then the Dirichlet spectrum of the Pauli operator is 
\[
\spec(H_P) = \left[ \spec(H_{\text{mag}}) - \frac{\beta}{A} \right] \cup \left[ \spec(H_{\text{mag}}) + \frac{\beta}{A} \right] ,
\]
with multiplicities being respected by the union. 

One obtains this same result at the level of operators, of course: first expand the definition of the Pauli operator to show that $H_P = (i\nabla + F)^2 I - \sigma \cdot B$,  and then use that the magnetic field is vertical to find $\sigma \cdot B=(\beta/A) \sigma_3$, which gives \eqref{eq:de1}-\eqref{eq:de2}.

We could have proved \autoref{Peigen} by using this splitting of the spectrum, as we now explain. Shifting the spectrum up by $|\beta|/A$ and multiplying by $A/G$ to obtain a scale invariant expression gives 
\[
\left( \spec(H_P) + \frac{|\beta|}{A} \right) \frac{A}{G} = \left[ \spec(H_{\text{mag}})\frac{A}{G} \right] \cup \left[ \spec(H_{\text{mag}})\frac{A}{G} + \frac{2|\beta|}{G} \right] .
\]
Note that the first spectrum on the right is not shifted, and the second is shifted by an amount $2|\beta|/G$ that is maximal for the disk (since $G$ is minimal for the disk). Hence one can prove \autoref{Peigen} by starting with the magnetic Laplacian result \autoref{Deigen} and extending the Hardy--Littlewood-P\'{o}lya  majorization technique to handle the union of a sequence and a shifted copy of the same sequence. We omit these proofs.

We chose to follow a more direct approach to proving \autoref{Peigen}, in the hope that it might help some future researcher to treat  non-Dirichlet boundary conditions.

\section*{Acknowledgments}
This work was partially supported by a grant from the Simons Foundation (\#204296 to Richard Laugesen), the National Science Foundation grant DMS-0803120 (Laugesen), and the Polish National Science Centre (NCN) grant 2012/07/B/ST1/03356 (Siudeja).

We are grateful to the Banff International Research Station for funding our participation in the workshop on ``Spectral Theory of Laplace and Schr\"{o}dinger Operators" (July/August 2013), during which this paper was almost finalized.

\newcommand{\doi}[1]{%
 \href{http://dx.doi.org/#1}{doi:#1}}
\newcommand{\arxiv}[1]{%
 \href{http://front.math.ucdavis.edu/#1}{ArXiv:#1}}
\newcommand{\mref}[1]{%
\href{http://www.ams.org/mathscinet-getitem?mr=#1}{#1}}

\end{document}